\newcounter{mynotes}
\declaretheorem[within=section]{theorem}
\declaretheorem[sibling=theorem]{corollary}
\declaretheorem[sibling=theorem]{lemma}
\declaretheorem[sibling=theorem]{definition}
\declaretheorem[sibling=theorem]{remark}
\newcounter{termcounter}
\renewcommand{\thetermcounter}{\Alph{termcounter}}
\crefname{term}{term}{terms}
\def\term{\@ifnextchar[\term@optarg\term@noarg}
\def\term@optarg[#1]#2{%
  \textup{(#1)}%
  \def\@currentlabel{#1}%
  \def\cref@currentlabel{[][2147483647][]#1}%
  \cref@label[term]{#2}}
\def\term@noarg#1{%
  \refstepcounter{termcounter}%
  \textup{(\thetermcounter)}%
  \cref@label[term]{#1}}
\newcommand{\ignore}[1]{}
\newcommand{\poly}{\mathrm{poly}}
\definecolor{DSred}{rgb}{1,0,0}
\renewcommand{\leq}{\leqslant}
\renewcommand{\geq}{\geqslant}
\renewcommand{\ge}{\geqslant}
\renewcommand{\epsilon}{\varepsilon}
\newcommand{\eps}{\epsilon}
\newcommand{\R}{\mathbb{R}}
\newcommand{\cE}{\mathcal E}
\newcommand{\cN}{\mathcal N}
\renewcommand{\cref}{\Cref}
\begin{document}

\title{How friends and non-determinism affect opinion dynamics}

\author{Arnab Bhattacharyya \qquad\qquad\qquad Kirankumar Shiragur\\[1em]
Department of Computer Science \& Automation\\
Indian Institute of Science\\
\href{}{\texttt{\{arnabb, kirankumar.shiragur\}@csa.iisc.ernet.in}}
}


\date{\today}
\maketitle

\begin{abstract}
The {\em Hegselmann-Krause system} ({\em HK system} for short) is one of the most popular models for the dynamics of opinion formation in multiagent systems. Agents are modeled as points in opinion space, and at every time step, each agent moves to the mass center of all the agents within unit distance. The rate of convergence of HK systems has been the subject of several recent works. In this work, we investigate two natural variations of the HK system and their effect on the dynamics. In the first variation, we only allow pairs of agents who are friends in an underlying social network to communicate with each other. In the second variation, agents may not move exactly to the mass center but somewhere close to it. The dynamics of both variants are qualitatively very different from that of the classical HK system. Nevertheless, we prove that both these systems converge in polynomial number of non-trivial steps, regardless of the social network in the first variant and noise patterns in the second variant.
\end{abstract}

\section{Introduction}

The dynamics of opinions in society is a very intricate and intriguing process. Especially in today's world, with the pervasive infiltration of social networks like Facebook and Twitter that allow quick broadcasts of opinions, phenomena which were once wild speculations by philosophers, such as the viral spread of memes \cite{Dawkins76}, are now easily observed and quantified. This acceleration of social processes is turning sociology into a quantitative science, where concrete models for social phenomena can be proposed and rigorously tested.

Sociologists have long identified several different processes that determine opinion dynamics, most notably, {\em normative} and {\em informational} \cite{DG55}. Normative influence refers to the influence that causes people to conform to a group's social norms. On the other hand, informational influence refers to the way people acquire the opinions of others, driven by the assumption that neighbors possess information about a situation. Informational influence is especially relevant in the context of understanding how opinions {\em change}, and it arguably is the dominant process for determining trends in, say, fashion, mobile phones and music.

In this work, we focus on quantitative models for informational influence. Such a model should specify how an individual agent updates its opinion using information learned from its ``neighbors". By now, this area has been heavily studied; see \cite{Jackson08} for a survey. The most basic such model is the classic DeGroot model \cite{deGroot74, French56, Harary59} where each agent's opinion is a real number between 0 and 1, and at every time step, each agent moves to some weighted average of its neighbors' positions, where the neighbors are determined according to an unchanging undirected graph. Such a system always reaches consensus, contrary to the existence of polarized states in society. Another classic model is the {\em voter model} \cite{CS73, HL75}. Here, there is an unchanging directed graph among the agents, and at every time step, a random agent selects a random neighbor and adopts the neighbor's opinion as its own. Again, such a system always reaches a consensus, and coalescing random walks \cite{DKS91} can be used to bound the convergence time.

In order to explain why consensus doesn't always arise in the real world, one can posit the presence of {\em stubborn agents}, agents which never change their own opinions (though they may certainly influence others). More generally, multiple studies \cite{Asch55, DG55, LRL79} have confirmed that even when agents are not stubborn, they usually have a {\em conformity bias}, i.e., they assign more weight to opinions that are already close to their own. This notion gives rise to the definition of {\em influence systems} or {\em flocking models}. 

\begin{figure*}[t]
\centering
\begin{subfigure}[b]{0.3\textwidth}
\includegraphics[width=\textwidth, height=2cm]{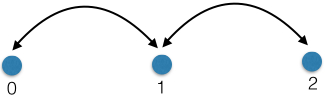}
\caption{No freezing time}
\label{fig:nofrz}a
\end{subfigure}
\begin{subfigure}[b]{0.3\textwidth}
\includegraphics[width=\textwidth, height=2.5cm]{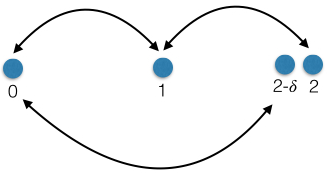}
\caption{Time to converge can be arbitrarily long}
\label{fig:initdep}
\end{subfigure}
\begin{subfigure}[b]{0.3\textwidth}
\includegraphics[width=\textwidth, height=4cm]{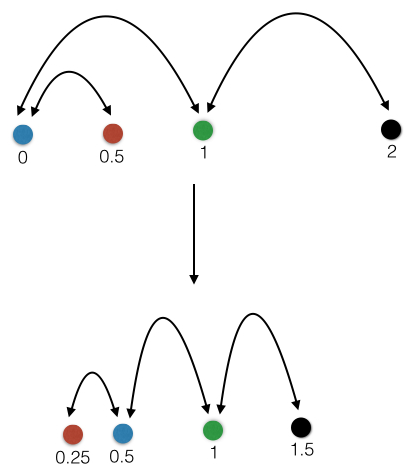}
\caption{Order of agents can change}
\label{fig:noorder}
\end{subfigure}
\caption{Properties of social HK systems}
\label{fig:sochk}
\end{figure*}

The most popular such flocking model is the {\em Hegselmann-Krause} system \cite{Krause97, HK02}, and this is the system that we focus on here. In its simplest incarnation, the system consists of $n$ agents placed on the real line, with the $i$'th agent at $x_t(i)$ at time $t$, and at every time step $t \geq 0$, the positions update as follows for all $i \in [n]$ synchronously:
\begin{equation}\label{eqn:hkupdate}
x_{t+1}(i) = \frac{1}{|\cN_{t}(i)|} \sum_{j \in \cN_{t}(i)} x_{t}(j)
\end{equation}
where $\cN_{t}(i) = \{j : |x_t(j) - x_t(i)|\leq 1\}$. Here, $1$ is  the {\em confidence bound}, as each agent only has confidence in those agents which are within this bound. 

The HK system  has  become quite popular as a mathematically clean and simple formalization of an endogenous dynamical system that captures interesting qualitative properties, such as polarization and conformity bias, found in opinion dynamics\footnote{{\small Moreover, the HK system can also model robotic rendezvous problems in plane and space \cite{BCM09}.}}. The expectation is that a systematic understanding of the dynamics of the HK system can lend insight into more detailed models and, hopefully, (some aspects of) reality. Indeed, convergence for the HK system is immediate, and the time complexity needed for the system to freeze is known upto a linear factor in $n$ (the best upper bound is $O(n^3)$ \cite{BBCN13,MT13} and the best lower bound $\Omega(n^2)$ \cite{WH14}). Note that it is not a priori clear that there exists a bound which depends only on $n$ and is independent of the agents' initial positions. 

But perhaps surprisingly, it has turned out that changing the model in even very simple ways leads to problems which we cannot handle mathematically. For example, one of the most common variations is to let the confidence bound depend on the agent \cite{Lorenz10, MB12}. That is, suppose $\cN_t(i) = \{j : |x_t(j) - x_t(i)| \leq r_i\}$ for some $r_i \geq 0$. This is called the {\em heterogeneous Hegselmann-Krause} system. A rigorous proof of its convergence is still missing, although convergence seems clear from simulations! This situation clarifies that we need to develop new technical tools in order to understand the dynamics of influence systems. Indeed, for general influence systems, Chazelle, in an impressive sequence of papers \cite{Chazelle11, Chazelle12}, developed a new algorithmic calculus to show that general influence systems are almost surely asymptotically periodic. However, these general results do not imply convergence for the specific case of heterogeneous HK.

\subsection{Our Contributions}
In this work, we rigorously study the convergence behavior of two variations of the HK model. Apart from their intrinsic interest from a sociology perspective, their analysis seem to pose similar mathematical difficulties as heterogeneous HK. Nevertheless,  we show that for both these systems, we can develop tools to understand their convergence behavior. Specifically, we study the following two models:

\begin{itemize}
\item
\textbf{{Social Hegselmann-Krause}}: 
One criticism of the Hegselmann-Krause model is that it only considers informational influence and ignores normative effects altogether. It ignores the fact that individuals belong to groups, and generally, information exchange only occurs between individuals in the same group. To model this fact, we assume that there exists an underlying social network such that two agents can interact with each other only when there is an edge between the two in this graph.

We formally define the {\em social Hegselmann-Krause} system as follows: given an undirected graph $G$ on $n$ nodes and a collection of $n$ agents initially at positions $x_0(1), \dots, x_0(n) \in \R$ respectively,   the social HK system updates the agents' positions synchronously for $t \geq1$ according to Equation (\ref{eqn:hkupdate}) where\footnote{{\small $E(G)$ denotes the edge set of $G$.}} $\cN_t(i) = \{j : (i, j) \in E(G) \text{ and } |x_t(j) -x_t(i)| \leq 1\}$. 

The social HK model differs in some very basic ways from the usual HK model, as shown in \cref{fig:sochk}. First of all, it is no longer true that the agents freeze after some time; agents can keep moving by some tiny amount indefinitely as the example in \cref{fig:nofrz} shows. We might hope though that, for every $\eps > 0$, after some time bound that depends on $n$ and $\eps$, the points stay within intervals of $\eps$. Even this is not true as the situation in \cref{fig:initdep} of the panel shows; there,  by making $\delta >0$ arbitrarily small, the agent initially at $0$ can take arbitrarily long to ``see" the agent initially at $2-\delta$. Finally, unlike the usual HK system, the agents do not preserve their order, as is clear from \cref{fig:noorder}.

As far as we know, we are the first to formally study the convergence behavior of the social HK model. Fortunato \cite{Fortunato05} also investigated the same system but to address a very different problem: Given a random initial configuration of agents in the interval $[0,1]$, what is the minimum confidence bound that ensures that the agents come to a consensus when the dynamics is that of social HK on a random graph of degree $d$? Fortunato's empirical result\footnote{{\small A similar empirical  result \cite{Fortunato04} has been rigorously proven \cite{ LU07} for the closely related Deffuant-Weisbuch model~\cite{WDAN02} on a social network.}} is that this threshold is $\sim 0.2$ when $d=\omega(1)$ and is $0.5$ when $d$ is constant. Perhaps in the style typical of physicists, their work focused on the equilibrium outcome, whereas we study the transient behavior. 

Given {\em $\eps$}, let us call a step of the dynamical system {\em $\eps$-non-trivial} if at least one pair of interacting agents is separatated by distance at most  {\em $\eps$}.
\begin{theorem}\label{thm:socmain}
Given an arbitrary initial configuration of $n$ agents evolving according to the social HK model defined by an arbitrary graph, the number of $\eps$-non-trivial steps in the dynamical system is $O(n^5/\eps^2)$. 
\end{theorem}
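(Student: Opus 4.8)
The plan is to exhibit a bounded potential that never increases and that must drop by an amount polynomial in $\eps$ and $1/n$ at every $\eps$-non-trivial step. The natural candidate — and the one insensitive to the (arbitrary, possibly huge) spread of the initial configuration, because it is truncated at the confidence bound — is
\[
\Phi_t \;=\; \sum_{(i,j)\in E(G)} \min\bigl\{\,1,\ (x_t(i)-x_t(j))^2\,\bigr\}
\;=\; |E(G)| \;-\; \sum_{\substack{(i,j)\in E(G)\\ j\in\cN_t(i)}}\bigl(1-(x_t(i)-x_t(j))^2\bigr)\,,
\]
so that $0\le\Phi_t\le\binom n2$, and, crucially, $\Phi_t$ is a \emph{continuous} function of the positions: an edge crossing the confidence bound does so while its summand equals $1$, so there is no jump. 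Writing $\Psi_t:=|E(G)|-\Phi_t$ for the ``total engagement'', it then suffices to prove (Step~1) $\Psi_{t+1}\ge\Psi_t$ always, and (Step~2) $\Psi_{t+1}-\Psi_t\ge c\,\eps^2/\poly(n)$ at every $\eps$-non-trivial step; the theorem follows since $\Psi$ lives in $[0,\binom n2]$.

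Step~1 (monotonicity; the crux). Expand $\Psi_{t+1}-\Psi_t$ and use that each new position $x_{t+1}(i)$ is a \emph{convex combination} of the $x_t(k)$, $k\in\cN_t(i)$, all of which lie within the confidence bound of $x_t(i)$; Jensen / Cauchy--Schwarz then bounds the relevant $(x_{t+1}(i)-x_{t+1}(j))^2$-type terms by averages of $(x_t(\cdot)-x_t(\cdot))^2$ over incident interacting pairs, and the symmetry of the relation ``$j\in\cN_t(i)$'' is used to re-sum over vertices. Edges that change interaction status between $t$ and $t+1$ are harmless precisely because of the truncation — they contribute near-zero engagement at the moment of crossing. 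The genuine obstacle here is the \emph{asymmetry introduced by the social graph}: unlike the classical HK system, two interacting agents can be dragged apart by disjoint neighbourhoods, so monotonicity of $\Psi$ is a global cancellation phenomenon rather than an edge-by-edge one, and the bookkeeping (degrees, the self-term, and edges straddling the threshold) is where the real work lies.

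Step~2 (progress at an $\eps$-non-trivial step). Fix a step at which some interacting pair $(i,j)$ satisfies $0<|x_t(i)-x_t(j)|\le\eps$, and work inside the connected component of the interaction graph $H_t=\{(i,j):j\in\cN_t(i)\}$ containing this pair. A contraction estimate in the spirit of the classical HK analysis should show that the engagement of some edge of this component increases by $\Omega(\eps^2/\poly(n))$: the $\eps$-closeness forces the neighbourhoods of $i$ and $j$ to overlap in their influence, so the averaging cannot leave every incident squared gap unchanged. Combining with Step~1 gives $\Phi_t-\Phi_{t+1}\ge c\,\eps^2/\poly(n)$ — possibly only after charging each $\eps$-non-trivial step to a strict decrease occurring within the following $O(n)$ steps, which is the kind of slack that turns a clean $n^4/\eps^2$ count into the stated $O(n^5/\eps^2)$. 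Pinning down the correct polynomial in this localized estimate, given that components of $H_t$ can be large and spread out and that $H_t$ itself can change drastically from one step to the next, is the second main difficulty.
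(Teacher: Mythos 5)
Your potential is, up to the additive constant $\binom n2-|E(G)|$, exactly the paper's energy $\cE(x_t)=\sum_{(i,j)\in E(C_{x_t})}\|x_t(i)-x_t(j)\|^2+\sum_{(i,j)\notin E(C_{x_t})}1$, and your observation that the truncation at the confidence bound handles edges whose interaction status flips is precisely the content of the paper's Lemma~\ref{lem:init} (if a pair of $G$-neighbours stops interacting, its distance exceeds $1$, so the truncated summand only goes the right way). So the skeleton is the right one. But both of the steps you label ``the crux'' and ``the second main difficulty'' are exactly the proof, and neither is carried out. The paper does not obtain the quantitative drop by a Jensen-type global cancellation or by a localized per-edge contraction charged over $O(n)$ steps; it writes the active energy as the Laplacian quadratic form $2\,\mathrm{Tr}(x_t^{T}(D_t-A_t)x_t)$, notes $P_t=D_t^{-1}A_t$, symmetrizes to $B_t=D_t^{-1/2}A_tD_t^{-1/2}$, and proves (Martinsson's Proposition~2.2) that $\cE(x_t)-\cE(x_{t+1})\ge(1-\lambda_t^2)\,\cE_{act}(x_t)$, where $\lambda_t$ is the second-largest eigenvalue modulus of $P_t$; the standard spectral bound $1-|\lambda_t|\ge 1/(n^2\,\mathrm{diam}(C_{x_t}))\ge 1/n^3$ then gives a per-step decrement $\Omega(\eps^2/n^3)$ at every $\eps$-non-trivial step, and the count $O(n^5/\eps^2)$ follows since $\cE\le n^2$. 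Nothing in your sketch substitutes for this spectral-gap argument, and your fallback of amortizing over the next $O(n)$ steps is speculative (and unnecessary in the paper, which gets progress in the very step).

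There is also a substantive issue with how you instantiate ``$\eps$-non-trivial'' in Step~2. The argument needs that at a non-trivial step some \emph{interacting} pair is at distance at least $\eps$, which is what makes $\cE_{act}(x_t)\ge\eps^2$ and feeds the spectral bound. Your reading ($0<|x_t(i)-x_t(j)|\le\eps$ for some interacting pair) makes the claimed per-step drop of $\Omega(\eps^2/\poly(n))$ unprovable: the paper's own no-freezing example (\cref{fig:nofrz}) has agents that keep moving by arbitrarily small amounts forever, so interacting pairs at tiny nonzero distances occur at infinitely many steps while the bounded potential cannot decrease by a fixed amount each time. So as written, Step~2 targets a false statement; with the correct quantifier the step is immediate from $\cE_{act}\ge\eps^2$ together with the (missing) spectral inequality. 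In short: right potential, right bookkeeping for threshold-crossing edges, but the two load-bearing inequalities are absent and the progress criterion is inverted.
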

Chazelle's result \cite{Chazelle11} implies an $n^{O(n)}$ bound for this system whereas our bound is polynomial. 

We also show that the same bound holds when the social network itself changes with time, provided its evolution follows certain constraints. Informally, we require that if two agents interact at time $t$ and they are within each other's confidence bound at time $t+1$, then they should keep on interacting at time $t+1$. In particular, if edges are never deleted from the social network, then only polynomial number of non-trivial steps take place

\item
\begin{figure}
\begin{center}
\includegraphics[width=0.5\textwidth, height=4cm]{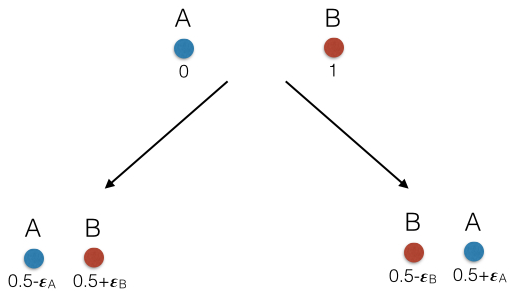}
\end{center}
\caption{Order not preserved by non-deterministic dynamics}
\label{fig:nondet}
\end{figure}

\textbf{Non-deterministic Hegselmann-Krause}: A different criticism of the HK system is that it is very rigid in that each agent must move to exactly the  mass center of the agents within unit distance. In particular, if two agents have the same set of agents within unit distance, then they move to exactly the same opinion at the next time step (and stay together thereafter). This is clearly not very realistic, as the effects of chance and variation are not taken into account.

To address these issues, for any fixed $\eps \in [0,1]$, we formally define the {\em $\eps$-non-deterministic Hegselmann-Krause} system. The system again consists of $n$ agents placed on the real line, with the $i$'th agent at $x_t(i)$ at time $t$, and at every time step $t \geq 0$, the positions update for all $i\in[n]$ synchronously as:
\begin{equation}
x_{t+1}(i) = x_t(i) + (1 + \eps_{i,t}) \sum_{j \in \cN_t(i)} \frac{x_t(j)-x_t(i)}{|\cN_t(i)|}
\end{equation}
where $\eps_{i,t} \in [-\eps, \eps]$ for every $i$ and $t$ independently, and $\cN_i(t) = \{j: |x_t(j) - x_t(i)|\leq 1\}$. Note that we term the system ``non-deterministic" instead of ``noisy", because the $\eps_{i,t}$'s are not assumed to be random. In fact, they can be entirely arbitrary values in $[-\eps,\eps]$, even chosen by an adversary depending on the current state. However, note that consensus remains a fixed point of this dynamics.

The dynamics of non-deterministic HK is quite different from that of HK. As \cref{fig:nondet} shows, the order of agents can change even when there are two agents and $\eps$ is arbitrarily small. Also, the fact that agents do not coalesce together once they see the same set of agents complicates the  system's behavior significantly. The work here is the first to handle such a general type of non-determinism in doing convergence analysis. In \cite{BBCN13}, a one-sided version of noise was discussed but there, the authors could adapt the argument to bound the convergence time of exact HK in a simple way. In contrast, we are not able do so here.
Our main result is:
\begin{theorem}\label{thm:ndmain}
Suppose $\eps < \frac{1}{4n^2}$.  Starting with an arbitrary initial arangement of $n$ agents evolving according to  the non-deterministic HK model, the number of steps before which all the agents are confined within intervals of length $\rho$ thereafter is $O(n^4 + \log(1/\rho)/\log n)$.
\end{theorem}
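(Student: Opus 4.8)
The plan is to split the evolution into a \emph{combinatorial phase} of length $O(n^4)$, during which the ``interaction topology'' of the system settles, followed by a \emph{contraction phase} of length $O(\log(1/\rho)/\log n)$, during which each surviving cluster shrinks geometrically down to width $\rho$. First I would record the one-step estimates. Writing $c_t(i)=\frac1{|\cN_t(i)|}\sum_{j\in\cN_t(i)}x_t(j)$ for the mass center, the update reads $x_{t+1}(i)=x_t(i)+(1+\eps_{i,t})(c_t(i)-x_t(i))$, and since every $j\in\cN_t(i)$ lies within distance $1$ of $x_t(i)$ we have $|c_t(i)-x_t(i)|\le\mathrm{diam}\,\cN_t(i)\le 1$. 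Hence: (i) $|x_{t+1}(i)-c_t(i)|=\eps_{i,t}|c_t(i)-x_t(i)|\le\eps$, i.e.\ every agent lands within $\eps$ of its noiseless HK image; (ii) the overall diameter grows by at most $2\eps$ per step (the extreme agent overshoots its in-hull mass center by $\le\eps$); (iii) agents more than $4$ apart never exchange order, since each moves by less than $2$; and (iv) any connected component of the interaction graph $G_t$ always spans an interval of length $\le n-1$ (follow a path of $\le n-1$ unit-length edges across it).

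Second, I would reduce the theorem to the statement that there is a time $T=O(n^4)$ after which $G_t$ (edges $=$ pairs at distance $\le1$) is constant and is a disjoint union of cliques. Granting this: for $t\ge T$ every agent in a component $C$ sees all of $C$, so $x_{t+1}(i)=\bar x^{C}_t+\eps_{i,t}(\bar x^{C}_t-x_t(i))$ and therefore $\mathrm{diam}_{t+1}(C)\le 2\eps\,\mathrm{diam}_t(C)<\frac1{2n^2}\mathrm{diam}_t(C)$. Iterating and using (iv), after $k$ further steps $\mathrm{diam}(C)\le(2n^2)^{-k}(n-1)$, which is $\le\rho$ once $k=O\!\bigl((\log n+\log(1/\rho))/\log(n^2)\bigr)=O(1+\log(1/\rho)/\log n)$. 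The same geometric bound shows the centroid of $C$ drifts by a total of at most $2\eps$ over all $t\ge T$, so the $>1$ separation between distinct cliques of $G_T$ (slightly inflated during the combinatorial phase) is never lost — which is what makes ``$G_t=G_T$ forever'' consistent. Adding $T=O(n^4)$ to the contraction length gives the claimed $O(n^4+\log(1/\rho)/\log n)$.

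Third — and this is the crux — I would prove the reduction by an amortized argument, since the two standard tools for classical HK are unavailable: order is not preserved, and agents never coalesce into exact point-clusters, so a single interaction no longer makes permanent discrete progress. I would fix a Lyapunov functional $\Phi_t$ of the positions that (a) is non-increasing under the noiseless HK map and stays in an interval of length $\poly(n)$ throughout the combinatorial phase (using (ii) and (iv)); (b) strictly decreases under the noiseless map by at least $1/\poly(n)$ on any step at which $G_t$ changes, i.e.\ at which a pair of agents crosses distance exactly $1$; and (c) is perturbed by the $\eps$-noise by at most $O(\eps\cdot\poly(n))$ per step. Because $\eps<\frac1{4n^2}$, the loss in (c) is dominated by the gain in (b), so the total number of topology-changing steps is $\poly(n)$. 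To upgrade ``finitely many changes'' to a \emph{time} bound, I would bound the length of each epoch of constant topology preceding the final one: if $G_t$ is constant and not a union of cliques, some component is connected but has diameter $>1$, and a spectral-gap estimate for that component's stochastic averaging matrix — the gap is $\Omega(1/n^2)$ for any connected $n$-vertex graph, and $\eps$ is small enough relative to it not to stall the contraction — forces a second-neighbour pair (initially within distance $2$) to come within distance $1$ in $O(n^2)$ steps, ending the epoch. Multiplying $\poly(n)$ epochs by $O(n^2)$ steps each, with the polynomial bookkeeping arranged so the product is $O(n^4)$, yields $T=O(n^4)$; and once no epoch admits a subsequent change, $G_t$ is frozen and, having no edges left to gain, is a union of cliques.

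The step I expect to be the main obstacle is constructing the functional $\Phi$ in the third paragraph: finding a quantity that is genuinely monotone for classical HK (plain second moments are not, because the center of mass moves on irregular interaction graphs), whose per-change decrease is quantitatively $\ge1/\poly(n)$ without any appeal to agent order, and which is stable enough that the adversarial $\eps$-noise — in particular the fact that clusters no longer freeze but only shrink — cannot accumulate to overwhelm it. Getting the exact polynomials right, both in the number of topology changes and in the spectral-gap constant, so that the final bound is $O(n^4)$ rather than a larger power, is the quantitative heart of the argument; the hypothesis $\eps<\frac1{4n^2}$ should be exactly what is needed to make (b) beat (c) and to keep the spectral contraction from being cancelled by noise.
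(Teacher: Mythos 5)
Your high-level skeleton (an $O(n^4)$ ``combinatorial'' phase after which the interaction graph is a union of cliques, followed by geometric contraction of each clique at rate $2\eps<\tfrac{1}{2n^2}$ giving the $O(\log(1/\rho)/\log n)$ tail) matches the shape of the paper's proof, and your treatment of the contraction phase is essentially the paper's \cref{l5}. But the heart of the argument --- bounding the combinatorial phase by $O(n^4)$ --- is exactly the part you leave unproved, and as sketched it has genuine gaps. Your property (b) asks for a functional that drops by $1/\poly(n)$ at every step at which the interaction graph changes; the known energy arguments for HK (the $\sum_{i,j}\min(\|x(i)-x(j)\|^2,1)$ functional and its spectral analysis) charge the decrease to the amount of \emph{movement}, not to topology changes, and a pair can cross the distance-$1$ threshold with arbitrarily small movement, so no standard functional has property (b); you acknowledge you do not have a construction. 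The amortization also does not obviously close even granting (a)--(c): the adversarial noise perturbs $\Phi$ at \emph{every} step, so over the full $O(n^4)$-step phase the accumulated loss is of order $n^4\cdot\eps\cdot\poly(n)$, which $\eps<\frac{1}{4n^2}$ does not make small compared with a $\poly(n)$ total decrease budget unless the polynomials are arranged very carefully --- and note that plausible candidates for $\Phi$ need not even be monotone under the noisy map, since with non-determinism clusters never freeze. Finally, the epoch-length claim (a connected component of diameter $>1$ with frozen topology must gain an edge within $O(n^2)$ steps, via a spectral gap argument robust to adversarial $\eps$-noise) is asserted without proof, and an epoch can also end by \emph{losing} an edge, which your counting must handle.

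The paper avoids all of this by a purely combinatorial/geometric argument centered on the leftmost agent $\ell(t)$: it first shows (for $\eps<\frac{1}{n-1}$) that $\min_i x_t(i)$ is non-decreasing, then partitions $\cN_t(\ell)$ into $L(t)$ (agents whose neighborhood equals $\cN_t(\ell)$) and $S(t)$, and proves a trichotomy at each step: either $S(t+1)=\emptyset$ (a sub-cluster peels off and thereafter contracts independently, which can happen at most $n$ times), or $|L(t+1)|<|L(t)|$ (at most $n$ consecutive occurrences), or some agent from $S(t)\cup T(t)$ re-enters $\cN_{t+1}(\ell)$, in which case every agent advances past $x_t(\ell)+\frac{1}{4n^2}$ within two steps, so this can happen only $O(n^3)$ times. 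Multiplying out gives the $O(n^4)$ bound with explicit constants and with the hypothesis $\eps<\frac{1}{4n^2}$ entering only through elementary one-step inequalities, rather than through a Lyapunov/spectral bookkeeping. So your proposal is not a complete proof: the missing functional $\Phi$ and the unproved epoch-length bound are precisely the content that the paper's $L/S/T$ trichotomy supplies by other means.
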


Again, in this case too, Chazelle's general result on bidirectional influence systems \cite{Chazelle11} implies an $n^{O(n)}$ bound whereas we show a polynomial bound.
\end{itemize}

\subsection{Our Techniques}
\paragraph{Social HK.}
In \cite{BBCN13}, the classical HK system is shown to converge in $O(n^3)$ steps by simply showing that the diameter of the system must shrink by a significant amount at each timestep, unless the leftmost and rightmost agents have already frozen. It is not clear how to extend this proof approach to the social HK system, because as \cref{fig:initdep} shows, the leftmost agent may not be frozen but may take an arbitrary long time before it moves by a significant amount. Therefore, we must search for some other energy function, which is not the diameter but which is also decreasing.

In fact, such an energy function has already been introduced \cite{RMF08}. For a given configuration $x = (x(1), \dots, x(n))$, define $\cE(x) = \sum_{(i,j)} \min(|x(i)-x(j)|^2, 1)$. The fact that this energy is decreasing is an important ingredient in several previous works \cite{BHT07, BHT09, BBCN13, EB14}. We generalize this energy function to the social HK setting as: $\cE(x) = \sum_{i\sim_t j} |x(i) - x(j)|^2 + \sum_{i \not\sim_t j} 1$, where $i \sim_t j$ means that agents $i$ and $j$ interact at time $t$. 

 To give a lower bound on the decrease of this energy function, we use an elegant approach introduced recently by Martinsson \cite{Martinsson15}. Martinsson relates the decrease to the eigenvalue gap of the communication graph, which is well-known to be $\geq 1/\poly(n)$. We show that the same approach applies to the social HK system also. We get that the energy decrement in every $\eps$-non-trivial step is $\Omega(\frac{\eps^2}{n^3})$, and as the energy of any initial configuration lies between 0 and $O(n^2)$, this gives an upper bound of $O(\frac{n^5}{\eps^2})$ on the number of $\eps$-non-trivial steps. This approach continues to apply even when the social network graph is changing under some natural restrictions.

\ignore{
For social HK model, most of the techniques used to prove an upper bound for convergence of the HK model don?t carry forward directly. In this paper, we show that for social HK model one cannot hope to bound the convergence time since it can be arbitrarily 
large. To deal with this problem we ignore the microscopic movements and just upper bound the number of non-trivial steps, which will be made formal later.\\
The convergence for this model is assured from \cite{Cha11}, where in Chazelle proves convergence and upper bounds the number of non-trivial steps for any bi-directional influence systems.  The upper bound we get from \cite{Cha11} on the number of non-trivial steps for this model is $n^{O(n)}$. In this paper, we bring down this upper bound to \emph{poly(n)}, which is a drastic improvement. To do so, we define an energy function $\mathcal{E}(x_t)$, which we show to be non-increasing in $t$. The technique we use to estimate the energy decrement in a step in the social HK model is similar to the one presented in \cite{Martinsson15}. We show that the energy decrement in every non-trivial step is $\Omega(\frac{\epsilon}{n^3})$, where $\epsilon \in (0,1]$ is any arbitrary fixed number and the energy $\mathcal{E}(x_0)$ of any initial configuration lies between 0 and $O(n^2)$ giving an upper bound of $O(\frac{n^5}{\epsilon})$ on the number of non-trivial steps.}

\paragraph{Non-deterministic HK.}
To analyze the non-deterministic HK system, we use a mixture of geometric and algebraic techniques similar to the ones presented in \cite{BBCN13}. However, the technical details are significantly more involved. The basic approach in \cite{BBCN13} is to look at the neighbor immediately to the right of the leftmost agent and analyze its influence. However, in our case, we need to partition the neighbourhood of the leftmost agent into two subsets and treat the subsets collectively. One might also notice that proving a lower bound on the movement of the leftmost agent as in \cite{BBCN13} is not enough, as the order of positions is not preserved and the leftmost agent may change over time. To overcome this difficulty, in our work we lower bound the amount by which the diameter (difference between the rightmost and the leftmost agents, whose identities change with time) shrinks within $n$ time steps, or else, we show that some agents separate out leaving behind sub-systems with smaller number of agents, where they both converge independently thereafter.

\section{Social HK model}\label{sec:soc}
We reformulate the social HK model in the multidimensional setting. This is very natural when an opinion consists of positions along multiple axes instead of just one. 

Let $x_t(i) \in \R^d$ (for $d \geq 1$) be the position of the $i$th agent at time $t$, and let $G$ be a fixed undirected graph on $n$ nodes. The dynamics is given by the following equation:
\begin{equation} \label{e1}
x_{t+1}(i) = \frac{1}{|\cN_t(i)|} \sum_{j \in N_t(i)} {x_t(j)}
\end{equation}
where $\cN_t(i) = \{j : (i,j) \in E(G) \text{ and }\|x_t(j) - x_t(i)\|_2 \leq 1\}$. We denote the new state at time $t+1$ by:
\begin{equation} \label{e2}
x_{t+1} = P_t x_t
\end{equation}
where $P_t$ is a row-stochastic matrix. As mentioned in the Introduction, our proof follows the same line as the recent analysis by Martinsson \cite{Martinsson15} but with some twists due to the presence of the social network graph.

For any configuration $x = (x(1), x(2), \dots, x(n))$ of $n$ agents, define the {\em communication graph} $C_x$ so that two nodes $i$ and $j$ are adjacent exactly when $(i,j) \in E(G)$ and $\|x(i) - x(j)\|_2 \leq 1$. That is, the communication graph is now the conjunction of the social network $G$ and the standard HK communication graph.
Also, for a configuration $x$ of $n$ agents, we define its {\em energy} as\\
\begin{equation} \label{e3}
\cE(x) = \sum_{(i,j) \in E(C_x)}{\|x(i)-x(j)\|_2^2} + \sum_{(i,j) \notin E(C_x)}{1}
\end{equation}
Note that the energy of any configuration lies between 0 and $n^2$. For the standard HK system, a very useful fact is that the energy $\cE(x_t)$ is non-increasing in $t$; see Theorem 2 of \cite{RMF08} for a proof. In fact, this fact is the driver for the bound on the freezing time of multidimensional HK found in \cite{BBCN13}. Our proof shows that the same energy decreases over time for the social HK system also. 

For a given state $x$ and for any ordered pair $(i,j) \in [n]^2$, we say that $(i, j)$ is {\em active} if $(i,j) \in E(C_x)$. We consequently define the {\em active part of the energy of $x$} as\\
\begin{equation} \label{e4}
\cE_{act}(x) = \sum_{(i,j) \text{ active}}{\|x(i)-x(j)\|_2^2} =\sum_{(i,j) \in E(C_x)}{\|x(i)-x(j)\|_2^2}
\end{equation}

Now, let $\{x_t\}$ be a sequence of configurations evolving according to (\ref{e2}). For simplicity of notation, let $E_t$ denote the edge set of the communication graph $C_{x_t}$. 
\begin{lemma}\label{lem:init}
\begin{align*}
\sum_{(i,j) \in E_{t+1}}&{\|x_{t+1}(i)-x_{t+1}(j)\|_2^2} + \sum_{(i,j) \notin E_{t+1}}{1} \\
&\leq \sum_{(i,j) \in E_{t}}{\|x_{t+1}(i)-x_{t+1}(j)\|_2^2} + \sum_{(i,j) \notin E_{t}}{1}
\end{align*}
\end{lemma}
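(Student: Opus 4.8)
The plan is to prove the inequality term by term over all pairs $(i,j)\in[n]^2$, after noticing that the only pairs whose contributions can differ between the two sides are the social edges of $G$. Write the left-hand side as $\sum_{(i,j)} f(i,j)$ and the right-hand side as $\sum_{(i,j)} g(i,j)$, both sums ranging over the same index set, where
\[
f(i,j) = \begin{cases}\|x_{t+1}(i)-x_{t+1}(j)\|_2^2 & \text{if }(i,j)\in E_{t+1},\\ 1 & \text{if }(i,j)\notin E_{t+1},\end{cases}
\qquad
g(i,j) = \begin{cases}\|x_{t+1}(i)-x_{t+1}(j)\|_2^2 & \text{if }(i,j)\in E_t,\\ 1 & \text{if }(i,j)\notin E_t.\end{cases}
\]
So it suffices to show $f(i,j)\le g(i,j)$ for every pair.

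First I would dispose of the pairs with $(i,j)\notin E(G)$: such a pair lies in neither $E_t$ nor $E_{t+1}$ (membership in a communication graph requires being an edge of $G$), so $f(i,j)=g(i,j)=1$. It remains to treat $(i,j)\in E(G)$. The key point is that membership in $E_{t+1}$ is decided by the positions $x_{t+1}$: for a social edge, $(i,j)\in E_{t+1}$ iff $\|x_{t+1}(i)-x_{t+1}(j)\|_2\le 1$, hence in all cases $f(i,j)=\min\bigl(\|x_{t+1}(i)-x_{t+1}(j)\|_2^2,\,1\bigr)$. Now split on $(i,j)\in E_t$: if $(i,j)\in E_t$ then $g(i,j)=\|x_{t+1}(i)-x_{t+1}(j)\|_2^2\ge \min(\|x_{t+1}(i)-x_{t+1}(j)\|_2^2,1)=f(i,j)$; if $(i,j)\notin E_t$ then $g(i,j)=1\ge \min(\|x_{t+1}(i)-x_{t+1}(j)\|_2^2,1)=f(i,j)$. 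Summing over all pairs gives the lemma.

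There is no genuine analytic obstacle here: the lemma is precisely the observation that, with positions frozen at $x_{t+1}$, the communication graph $C_{x_{t+1}}$ is the energy-minimizing choice of edge set among all subgraphs of $G$, since it picks $\min(\text{squared length},1)$ on each social edge and $1$ on each non-edge. The only thing requiring care is bookkeeping — whether pairs are counted ordered or unordered, and keeping the social-network constraint "$(i,j)\in E(G)$" consistently in force at both times $t$ and $t+1$ — and the displayed termwise argument handles all of this uniformly. This lemma is the easy half of energy monotonicity; the substance, namely bounding $\sum_{(i,j)\in E_t}\|x_{t+1}(i)-x_{t+1}(j)\|_2^2$ against $\cE(x_t)$ and quantifying the decrease via the spectral gap of $C_{x_t}$ in the manner of Martinsson, comes in the subsequent steps.
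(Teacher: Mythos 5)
Your termwise comparison is correct and is essentially the paper's own argument: the paper also compares pair by pair, splitting into four cases on membership in $E_t$ and $E_{t+1}$, with the same key observation that for a social edge absence from $E_{t+1}$ forces $\|x_{t+1}(i)-x_{t+1}(j)\|_2^2 > 1$ (your $\min(\cdot,1)$ packaging is just a cleaner way of stating the same case analysis).
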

\begin{proof}
There are four cases to look at:
\begin{enumerate}
\item
\textbf{$\bm{(i,j) \in E_t}$ and $\bm{(i,j) \in E_{t+1}}$}\\
In this case. we are adding the same term ($\|x_{t+1}(i)-x_{t+1}(j)\|_2^2$) to both LHS and RHS. 
\item
\textbf{$\bm{(i,j) \notin E_t}$ and $\bm{(i,j) \notin E_{t+1}}$}\\
In this case too, we are adding the same term (1) to both LHS and RHS. 
\item 
\textbf{$\bm{(i,j) \in E_t}$ and $\bm{(i,j) \notin E_{t+1}}$}\\
In this case, note that $\|x_{t+1}(i)-x_{t+1}(j)\|_2^2>1$, because otherwise $(i,j) \notin E(G)$ which contradicts the fact that $(i,j) \in E_t$.  Hence in this case, we are adding a greater term to the RHS ($\|x_{t+1}(i)-x_{t+1}(j)\|_2^2>1$) than to the LHS (1). 
\item
\textbf{$\bm{(i,j) \notin E_t}$ and $\bm{(i,j) \in E_{t+1}}$}\\
Since $ (i,j) \in E_{t+1}$, $\|x_{t+1}(i)-x_{t+1}(j)\|_2^2 \leq 1$. Hence in this case too we are adding a term (1) to RHS which is at least the term ($\|x_{t+1}(i)-x_{t+1}(j)\|_2^2 \leq 1$) added to the LHS. 
\end{enumerate}
As the inequality is true term-wise, we have  LHS $\leq$ RHS.
\end{proof}

\newtheorem{proposition}{Proposition}
\begin{proposition}[Proposition 2.2 in \cite{Martinsson15}]
For each $t \geq 0$, let
$$\lambda_t = \max \{  |\lambda| : \lambda \neq 1 \text{ is an eigenvalue of } P_t \}.$$
Then:
\begin{equation} \label{e6}
\cE(x_t)-\cE(x_{t+1}) \geq (1-\lambda_t^2)\cE_{act}(x_t)
\end{equation}
\end{proposition}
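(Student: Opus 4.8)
The plan is to reduce the inequality (\ref{e6}) to a spectral comparison for a single coordinate and then invoke \cref{lem:init}. Write $A_t$ for the symmetric $0/1$ adjacency matrix of the communication graph $C_{x_t}$ (where each vertex is adjacent to itself, so that $i\in\cN_t(i)$), let $D_t$ be the diagonal degree matrix with $D_t(i,i)=|\cN_t(i)|$, and note that the update (\ref{e1})--(\ref{e2}) is exactly the map $P_t=D_t^{-1}A_t$ applied to each coordinate of the configuration. It then suffices to prove that for every vector $u\in\R^n$,
$$\sum_{(i,j)\in E_t}\bigl(P_t u(i)-P_t u(j)\bigr)^2 \;\le\; \lambda_t^2\sum_{(i,j)\in E_t}\bigl(u(i)-u(j)\bigr)^2,$$
since applying this to each of the $d$ coordinate vectors $u^{(k)}(i):=x_t(i)_k$ and summing over $k$ gives $\sum_{(i,j)\in E_t}\|x_{t+1}(i)-x_{t+1}(j)\|_2^2\le\lambda_t^2\,\cE_{act}(x_t)$. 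Granting this, \cref{lem:init} yields $\cE(x_{t+1})\le\lambda_t^2\,\cE_{act}(x_t)+\sum_{(i,j)\notin E_t}1$, and subtracting this from the identity $\cE(x_t)=\cE_{act}(x_t)+\sum_{(i,j)\notin E_t}1$ gives (\ref{e6}) at once.

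For the one-coordinate inequality I would rewrite both sides through the graph Laplacian $L_t:=D_t-A_t=D_t(I-P_t)$. Expanding the square shows that $\sum_{(i,j)\in E_t}(u(i)-u(j))^2$ equals $2\,u^\top L_t u$ (the factor $2$ from running over ordered pairs cancels on both sides), and $u^\top L_t u=\langle u,(I-P_t)u\rangle_{D_t}$ with the degree-weighted inner product $\langle a,b\rangle_{D_t}:=a^\top D_t b$. The key structural fact is that $P_t$ is self-adjoint with respect to $\langle\cdot,\cdot\rangle_{D_t}$ — equivalently, $D_t^{-1/2}A_tD_t^{-1/2}$ is symmetric, i.e. the walk $P_t$ is reversible for the degree measure — which follows from the symmetry of $A_t$. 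Since $P_t$ commutes with $I-P_t$, setting $v=P_t u$ gives
$$v^\top L_t v \;=\; \langle P_t u,(I-P_t)P_t u\rangle_{D_t} \;=\; \bigl\langle u,\,P_t^2(I-P_t)\,u\bigr\rangle_{D_t}.$$
Diagonalizing $P_t$ in a $D_t$-orthonormal eigenbasis $\psi_1,\dots,\psi_n$ with real eigenvalues $\mu_1,\dots,\mu_n\in[-1,1]$ and expanding $u=\sum_\ell c_\ell\psi_\ell$, the desired bound $v^\top L_t v\le\lambda_t^2\,u^\top L_t u$ becomes the termwise scalar inequality $\mu_\ell^2(1-\mu_\ell)\le\lambda_t^2(1-\mu_\ell)$. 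If $\mu_\ell=1$ both sides vanish; if $\mu_\ell\neq1$ then $1-\mu_\ell>0$ and, by the very definition of $\lambda_t$, $|\mu_\ell|\le\lambda_t$, hence $\mu_\ell^2\le\lambda_t^2$ and the inequality holds. Summing over $\ell$ finishes the argument.

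I do not anticipate a genuine obstacle: once the right objects are in place the proof is a Rayleigh-quotient computation. The two points that require care are (i) using the degree-weighted inner product rather than the standard one — $P_t$ is not symmetric, so a naive eigendecomposition of $P_t$ is not orthogonal and the estimate would not go through; and (ii) the eigenvalue $1$, whose eigenspace is spanned by the indicator vectors of the connected components of $C_{x_t}$ — one must not divide by $1-\mu_\ell$ there but simply note that those directions contribute $0$ to both sides, which is exactly why the statement is in terms of $\lambda_t$ (the largest non-trivial eigenvalue in absolute value) rather than the spectral radius of $P_t$.
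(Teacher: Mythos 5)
Your proof is correct and takes essentially the same route as the paper's (which reproduces Martinsson's argument): bound $\cE(x_{t+1})$ via \cref{lem:init}, rewrite both energies through the Laplacian $D_t-A_t$, and reduce to a spectral inequality for the symmetrized matrix $D_t^{-1/2}A_tD_t^{-1/2}$, i.e.\ for $P_t$ in the degree-weighted inner product. The only difference is cosmetic: the paper works with traces of the $n\times d$ configuration matrix and leaves the final estimate as ``standard linear algebra,'' whereas you argue coordinate-by-coordinate and carry out that eigendecomposition step (including the termwise bound $\mu^2(1-\mu)\leq\lambda_t^2(1-\mu)$ and the treatment of the eigenvalue $1$) explicitly.
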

\begin{proof}
We reproduce the proof from \cite{Martinsson15} for completeness.
Let $A_t$ denote the adjacency matrix of $C_{x_t}$, and let $D_t$ denote its degree matrix, that is, the diagonal matrix whose $(i,i)$'th element is given by the degree of $i$(Recall that every vertex in $C_{x_t}$ has an edge to itself). Observe that $P_t = D_t^{-1}A_t$. Recall $E_t= E(C_{x_t})$. We have:
\begin{align*}
\cE(x_t) &= \sum_{(i,j) \in E_t}{||x_t(i)-x_t(j)||_2^2} + \sum_{(i,j) \notin E_t}{1}\\
&= 2\text{Tr}(x_t^T(D_t-A_t)x_t) + \sum_{(i,j) \notin E_t}{1}
\end{align*}
where Tr($\cdot$) denotes trace. Here, we interpret $x_t$ as an $n \times d$ matrix.

Now consider:
\begin{align*}
\cE(x_{t+1}) &= \sum_{(i,j) \in E_{t+1}}{\|x_{t+1}(i)-x_{t+1}(j)\|_2^2} + \sum_{(i,j) \notin E_{t+1}}{1} \\
&\leq \sum_{(i,j) \in E_{t}}{\|x_{t+1}(i)-x_{t+1}(j)\|_2^2} + \sum_{(i,j) \notin E_{t}}{1} &&\text{(by \cref{lem:init})}\\
& = 2Tr(x_{t+1}^T(D_{t}-A_{t})x_{t+1}) + \sum_{(i,j) \notin E_t}{1}\\
&= 2Tr((D_t^{-1}A_tx_t)^T(D_{t}-A_{t})D_t^{-1}A_tx_t) + \sum_{(i,j) \notin E_t}{1}
\end{align*}
Hence, it suffices to show that\\
\begin{align} \label{e7}
Tr((D_t^{-1}&A_tx_t)^T(D_{t}-A_{t})D_t^{-1}A_tx_t) \nonumber\\
&= Tr(x_t^TA_tD_t^{-1}(D_{t}-A_{t})D_t^{-1}A_tx_t)\nonumber\\
&\leq \lambda_t^2Tr(x_t^T(D_t-A_t)x_t)
\end{align}
Let $y_t = D^{\frac{1}{2}}x_t$ and $B = D_t^{\frac{1}{2}}P_t D_t^{-\frac{1}{2}} = D_t^{-\frac{1}{2}}A_tD_t^{-\frac{1}{2}}$. It is straightforward to show that (\ref{e7}) simplifies to\\
\begin{equation} \label{e8}
Tr(y_t^TB_t(I-B_t)B_ty_t) \leq \lambda_t^2Tr(y_t^T(I-B_t)y_t)
\end{equation}
This inequality follows by standard linear algebra.
\end{proof}
The following is a standard result in spectral graph theory:
\begin{proposition}\label{prop22} 
For any $t \geq 0$, we have
\begin{equation} \label{e10}
| \lambda_t| \leq 1- \frac{1}{n^2\text{diam}(C_{x_t})}
\end{equation}
where $diam(C_{x_t})$ denotes the graph diameter of $C_{x_t}$. If $C_{x_t}$ is not connected, we interpret $diam(C_{x_t})$ as the largest diameter of any connected component of $C_{x_t}$.
\end{proposition}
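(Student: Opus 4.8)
The plan is to recognize $P_t=D_t^{-1}A_t$ as the transition matrix of a reversible random walk on the communication graph $C_{x_t}$ and to bound separately its second‑largest eigenvalue and its most negative eigenvalue, since $\lambda_t=\max\{\,|\mu|: \mu\ \text{an eigenvalue of }P_t,\ \mu\neq 1\,\}$ is exactly the larger in absolute value of these two. First I would pass to the symmetric matrix $B_t=D_t^{-1/2}A_tD_t^{-1/2}$, which is conjugate to $P_t$ and hence has the same (real) eigenvalues in $[-1,1]$, with $1$ on top. If $C_{x_t}$ is disconnected it suffices to prove the bound on each connected component $K$: there the number of vertices is $\le n$ and, by the convention in the statement, $\mathrm{diam}(K)\le\mathrm{diam}(C_{x_t})$, so a bound of the form $1-\tfrac{1}{n_K^2\,\mathrm{diam}(K)}$ implies the claimed one. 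So I would henceforth assume $C_{x_t}$ connected and write $D=\mathrm{diam}(C_{x_t})$.

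For the second‑largest eigenvalue I would use the normalized Laplacian $\mathcal L_t=I-B_t$, whose smallest nonzero eigenvalue $\mu_2$ satisfies $\lambda_2(P_t)=1-\mu_2$. The key step is the classical diameter estimate $\mu_2\ge \tfrac{1}{D\cdot\mathrm{vol}(C_{x_t})}$, which I would derive from the Rayleigh quotient $\mu_2=\min\bigl\{\sum_{\{i,j\}\in E_t,\,i\neq j}(g(i)-g(j))^2\big/\sum_i d_i g(i)^2:\ \sum_i d_i g(i)=0,\ g\neq 0\bigr\}$, where the self‑loops cancel out of the numerator. Taking an optimal $g$, letting $i_0$ attain $\|g\|_\infty$, and using $\sum_i d_i g(i)=0$ to produce a vertex $j_0$ with $g(i_0)g(j_0)\le 0$ (so $|g(i_0)-g(j_0)|\ge\|g\|_\infty$), I would sum the increments along a shortest $i_0$–$j_0$ path and apply Cauchy–Schwarz to get $\sum_{\{i,j\}\in E_t}(g(i)-g(j))^2\ge \|g\|_\infty^2/D$, while $\sum_i d_i g(i)^2\le \|g\|_\infty^2\,\mathrm{vol}(C_{x_t})$. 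Since $d_i=|\cN_t(i)|\le n$, we have $\mathrm{vol}(C_{x_t})\le n^2$, hence $\mu_2\ge \tfrac{1}{n^2 D}$ and $\lambda_2(P_t)\le 1-\tfrac{1}{n^2 D}$.

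For the most negative eigenvalue I would exploit that every vertex of $C_{x_t}$ carries a self‑loop, so $P_t(i,i)=1/d_i\ge 1/n$ for all $i$. Then $P_t-\tfrac1n I$ has non‑negative entries and constant row sum $1-\tfrac1n$, so $Q_t=(P_t-\tfrac1n I)/(1-\tfrac1n)$ is row‑stochastic and $P_t=\tfrac1n I+(1-\tfrac1n)Q_t$; every eigenvalue of $P_t$ is therefore $\tfrac1n+(1-\tfrac1n)\nu$ for some eigenvalue $\nu\ge -1$ of $Q_t$, hence at least $\tfrac2n-1$. As $\tfrac2n-1\ge -1+\tfrac1{n^2 D}$ (equivalently $2nD\ge1$), all eigenvalues of $P_t$ other than $1$ lie in $[-1+\tfrac1{n^2D},\,1-\tfrac1{n^2D}]$, which is the assertion.

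The only substantive ingredient is the diameter–volume lower bound on $\mu_2$; everything else is bookkeeping. It is a standard spectral graph theory estimate, so I do not anticipate real difficulty, but I would be careful about two points: the self‑loops play a double role — they make the Laplacian Rayleigh quotient a clean sum over non‑loop edges \emph{and} keep the most negative eigenvalue bounded away from $-1$ (without them the bound is simply false for bipartite communication graphs) — and the reduction to connected components must use exactly the convention for $\mathrm{diam}$ of a disconnected graph that the statement adopts.
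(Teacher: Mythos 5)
Your proof is correct. Note that the paper does not actually prove this proposition at all --- it is invoked as ``a standard result in spectral graph theory'' with no argument supplied --- so what you have written is a complete, self-contained derivation of exactly the standard fact being cited (essentially Chung's diameter--volume bound on the spectral gap of the normalized Laplacian, combined with the laziness induced by self-loops). Both halves check out: the harmonic Rayleigh-quotient formulation with constraint $\sum_i d_i g(i)=0$, the choice of $i_0$ attaining $\|g\|_\infty$ and $j_0$ with $g(i_0)g(j_0)\le 0$, the Cauchy--Schwarz estimate along a shortest path of length at most $\mathrm{diam}$, and $\mathrm{vol}\le n^2$ give $\mu_2\ge 1/(n^2\,\mathrm{diam})$; and since every vertex of $C_{x_t}$ carries a self-loop (as the paper notes inside the proof of the preceding proposition), $P_t(i,i)\ge 1/n$, so the decomposition $P_t=\tfrac1n I+(1-\tfrac1n)Q_t$ with $Q_t$ row-stochastic pins the most negative eigenvalue above $\tfrac2n-1\ge -1+\tfrac1{n^2\,\mathrm{diam}}$. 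Your observation that the self-loops are what rule out the bipartite counterexample is exactly the right thing to flag. Two cosmetic caveats: the decomposition into $Q_t$ needs $n\ge 2$ (the case $n=1$ is vacuous), and if every component of $C_{x_t}$ is a singleton then $\mathrm{diam}(C_{x_t})=0$ and the stated bound is degenerate --- but then $P_t$ has no eigenvalue other than $1$, so there is nothing to prove; this is an artifact of the statement rather than a gap in your argument.
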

We are now ready to prove our main theorem. Observe that $\cE_{act}(x_t) > \eps^2$ whenever the $t$'th step is $\eps$-non-trivial. \cref{thm:socmain} is therefore immediately implied by the result below.
\begin{theorem}
For any $\eps > 0$, given a social HK system with $n$ agents in $\R^d$, there are $O(n^5/\eps)$ values of $t$ for which $\cE_{act}(x_t) > \eps$. 
\end{theorem}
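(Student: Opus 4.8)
The plan is to run a one-line potential-function argument with the energy $\cE(x_t)$ of \eqref{e3} as the potential, feeding it the two propositions just proved. Since $0 \le \cE(x) \le n^2$ for every configuration, it suffices to establish two facts: (a) $\cE(x_t)$ is non-increasing in $t$, and (b) every step with $\cE_{act}(x_t) > \eps$ decreases $\cE$ by more than $\eps/n^3$. Given (a) and (b), the number of steps with $\cE_{act}(x_t) > \eps$ is at most $\cE(x_0)/(\eps/n^3) \le n^5/\eps$, which is the claimed bound (and, by the observation preceding the theorem that $\cE_{act}(x_t) > \eps^2$ on every $\eps$-non-trivial step, instantiating the statement with $\eps^2$ in place of $\eps$ yields \cref{thm:socmain}).

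Fact (a) is immediate: $P_t$ is row-stochastic so $|\lambda_t| \le 1$, hence $1 - \lambda_t^2 \ge 0$, and then \eqref{e6} together with $\cE_{act}(x_t) \ge 0$ gives $\cE(x_{t+1}) \le \cE(x_t)$ for every $t$. For fact (b), I would argue as follows. If $\cE_{act}(x_t) > \eps > 0$ then $C_{x_t}$ must contain a non-self-loop edge, so some connected component of $C_{x_t}$ has between $2$ and $n$ vertices, and since self-loops do not affect distances between distinct vertices we get $1 \le \mathrm{diam}(C_{x_t}) \le n - 1$. Proposition~\ref{prop22} then yields $|\lambda_t| \le 1 - \frac{1}{n^2\,\mathrm{diam}(C_{x_t})} \le 1 - \frac{1}{n^3}$, so $\lambda_t^2 \le (1 - \frac{1}{n^3})^2 \le 1 - \frac{1}{n^3}$, i.e. $1 - \lambda_t^2 \ge \frac{1}{n^3}$. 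Plugging this and $\cE_{act}(x_t) > \eps$ into \eqref{e6} gives $\cE(x_t) - \cE(x_{t+1}) > \eps/n^3$, as desired. Telescoping these drops over all such steps, using (a) and $\cE \ge 0$ to bound the total decrease by $\cE(x_0) \le n^2$, finishes the proof.

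I do not expect any real obstacle here: the entire content is packaged in the two propositions, and what remains is bookkeeping. The two points that need a little attention — and that are handled above — are (i) making sure monotonicity of $\cE$ holds at \emph{every} step, not only the non-trivial ones, so that the telescoping is valid, and (ii) the possibility that $C_{x_t}$ is disconnected or has singleton components, which is accommodated by the convention (already part of the statement of Proposition~\ref{prop22}) that $\mathrm{diam}(C_{x_t})$ denotes the maximum diameter over connected components; this quantity still lies between $1$ and $n-1$ whenever $\cE_{act}(x_t) > \eps > 0$.
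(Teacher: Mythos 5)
Your proposal is correct and follows essentially the same route as the paper: bound $\mathrm{diam}(C_{x_t})$ by $n$, invoke Proposition~\ref{prop22} to get an eigenvalue gap of $\Omega(1/n^3)$, feed it into \eqref{e6} to get an energy drop of $\Omega(\eps/n^3)$ on each step with $\cE_{act}(x_t) > \eps$, and telescope against $\cE(x_0) \le n^2$. Your treatment is in fact slightly more explicit than the paper's (monotonicity at every step, the disconnected/singleton-component case), but it is the same argument.
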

\begin{proof} Now given any initial configuration, we have $\text{diam}(C_{x_t}) \leq n$ (or else, we can decompose the system into independent subsystems and analyze each separately). Applying Proposition \ref{prop22}, it follows that the energy decrement in each step with $\cE_{act}(x_t) > \eps$ is $\Omega(\frac{\epsilon}{n^3})$ and such steps can hence occur at most $O(\frac{n^5}{\epsilon})$ times.\\
\end{proof}

\subsection{Changing social network}\label{sec:friendly}
One may ask what happens to the convergence rate when the social network itself evolves with time and is not fixed. Let $G_t$ denote the social network graph at time $t$. To make sure that the above proof carries through, we need to suitably restrict the evolution of $G_t$.

Given a sequence of configurations $x_t = (x_t(1), \dots, x_t(n))$, we again have the communication graph $C_{x_t}$ where two nodes $i$ and $j$ are adjacent if $(i,j) \in E(G_t)$ and $\|x_t(i) - x_t(j)\|_2 \leq 1$. As before, let $E_t = E(C_{x_t})$.\\[-1em]  
\begin{definition}
Call a social HK system defined by a sequence of time-varying social networks $G_t$ as {\em friendly} if it is the case that whenever $(i,j) \in E_t$ and $\|x_{t+1}(i) - x_{t+1}(j)\|_2 \leq 1$, $(i,j) \in E(G_{t+1})$ (and hence, $(i,j) \in E_{t+1})$).
\end{definition}
In other words, in a friendly HK system, if two agents interact at time $t$, and they stay within distance $1$ in the next time step, then they keep interacting with each other at time $t+1$. Note that the evolution of $G_t$ may be endogenous (i.e., depend on the states $x_t$). We observe that under this natural condition of friendliness, the above proof goes through without any changes.
\begin{theorem}
For any $\eps > 0$, given a friendly social HK system with $n$ agents in $\R^d$, there are $O(n^5/\eps)$ values of $t$ for which $\cE_{act}(x_t) > \eps$.
\end{theorem}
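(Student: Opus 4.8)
The plan is to observe that the entire argument for the fixed-network case rests on exactly three ingredients: \cref{lem:init}, Proposition 2.2 of \cite{Martinsson15} (the eigenvalue-gap energy-decrement bound), and \cref{prop22} (the spectral-gap estimate). Of these, only \cref{lem:init} refers to the social network at all; Proposition 2.2 of \cite{Martinsson15} and \cref{prop22} are statements purely about the communication graphs $C_{x_t}$ and the row-stochastic matrices $P_t$, making no reference to how the $C_{x_t}$ arose. So all the work is to re-examine \cref{lem:init} when the fixed $G$ is replaced by a friendly sequence $\{G_t\}$; everything downstream is then literally unchanged.

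First I would revisit the case analysis in the proof of \cref{lem:init}. Cases 1, 2, and 4 never invoke $G$: Cases 1 and 2 add identical terms to both sides, and Case 4 uses only that $(i,j)\in E_{t+1}$ forces $\|x_{t+1}(i)-x_{t+1}(j)\|_2\le 1$, which is immediate from the definition of $E_{t+1}$. The single place the fixed graph was used is Case 3, where one needs $\|x_{t+1}(i)-x_{t+1}(j)\|_2^2>1$ for every pair with $(i,j)\in E_t$ but $(i,j)\notin E_{t+1}$. Here I would substitute the friendliness hypothesis for the old argument: suppose toward a contradiction that $\|x_{t+1}(i)-x_{t+1}(j)\|_2\le 1$; since $(i,j)\in E_t$, the definition of a friendly system gives $(i,j)\in E(G_{t+1})$, and combined with $\|x_{t+1}(i)-x_{t+1}(j)\|_2\le 1$ this yields $(i,j)\in E_{t+1}$, contradicting $(i,j)\notin E_{t+1}$. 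Hence $\|x_{t+1}(i)-x_{t+1}(j)\|_2>1$, and Case 3 concludes exactly as before. This reproves \cref{lem:init} verbatim in the friendly setting, and in particular that $\cE(x_t)$ is non-increasing.

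With \cref{lem:init} in hand, the remainder of the argument runs unmodified. Proposition 2.2 of \cite{Martinsson15} gives $\cE(x_t)-\cE(x_{t+1})\ge (1-\lambda_t^2)\,\cE_{act}(x_t)$ for every $t$; \cref{prop22} gives $|\lambda_t|\le 1-\frac{1}{n^2\,\mathrm{diam}(C_{x_t})}$; and, as in the fixed case, we may assume $\mathrm{diam}(C_{x_t})\le n$, since otherwise $C_{x_t}$ decomposes into components spanning disjoint agent sets that evolve independently thereafter, and we induct on the number of agents. Consequently every step with $\cE_{act}(x_t)>\eps$ decreases the energy by $\Omega(\eps/n^3)$. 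Since $\cE(x_0)\le n^2$ and $\cE(x_t)\ge 0$ always, at most $O(n^5/\eps)$ such steps can occur, which is the claim.

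I do not expect a genuine obstacle: the content of the proof is really the remark that friendliness is, by design, exactly the condition that keeps Case 3 of \cref{lem:init} alive once $G$ is allowed to vary. The only point meriting a line of care is confirming that Proposition 2.2 of \cite{Martinsson15} and \cref{prop22} are insensitive to time-variation of the network — but both are already stated for an arbitrary sequence $\{x_t\}$ with $x_{t+1}=P_t x_t$ where $P_t$ is the normalized (self-looped) adjacency matrix of $C_{x_t}$, which remains the situation here, so nothing needs adjusting.
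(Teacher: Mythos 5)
Your proposal is correct and follows essentially the same route as the paper: the paper's own proof likewise notes that only Case 3 of \cref{lem:init} used the fixed graph, and that friendliness supplies exactly the implication ($(i,j)\in E_t$ and $(i,j)\notin E_{t+1}$ forces $\|x_{t+1}(i)-x_{t+1}(j)\|_2>1$) needed there, with the spectral-gap and energy-counting steps carried over unchanged. Your write-up merely spells out these checks in more detail than the paper does.
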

\begin{proof}
The only part of the proof which needs a second look is case 3 in the proof of \cref{lem:init}. Now, the definition of friendliness ensures that if $(i,j) \notin E_{t+1}$ and $(i,j) \in E_t$, then $\|x_{t+1}(i) - x_{t+1}(j)\|_2 > 1$. 
\end{proof}
Note that without the friendliness assumption, Chazelle \cite{Chazelle11} shows a bound of $n^{O(n)}$ for the number of non-trivial steps. We conjecture that the friendliness assumption is necessary for a polynomial bound.

\subsection{Experimental Results}\label{sec:exp}
\begin{figure}
\begin{center}
\includegraphics[width=0.7\textwidth]{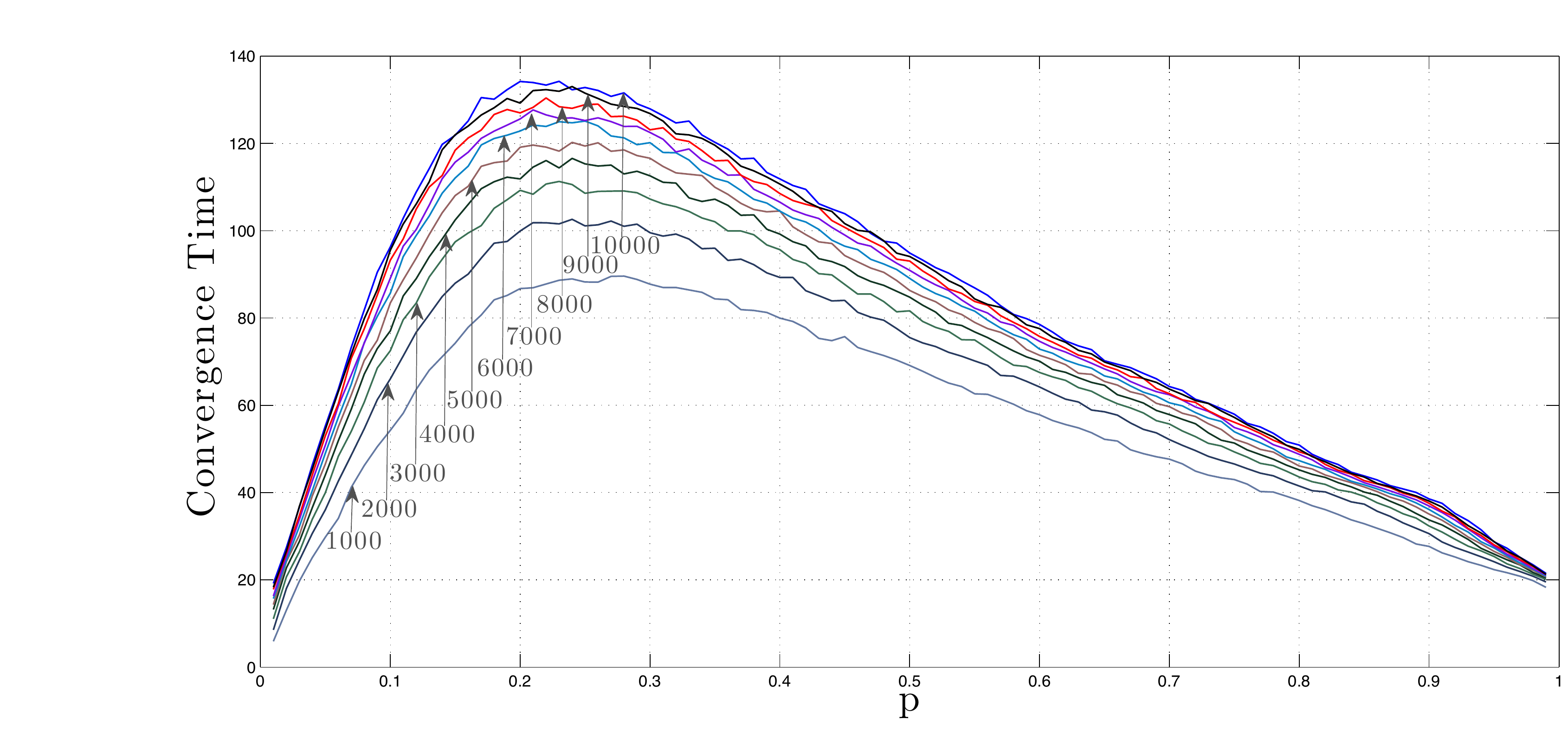}
\end{center}
\caption{{\small The plot shows the convergence time for social HK systems when the initial states of the $n$ agents are uniform in $[1,n]$ and the social network is the random graph $G(n,p)$ of edge density $p$. $n$ was varied from $1000$ to $10000$ as shown, and $p$ was varied from $0.01$ to $1$ in steps of size $0.01$. See text for discussion of ``convergence time''.}}
\label{fig:exp}
\end{figure}
Our analysis above is very general in the sense that our bound for the number of non-trivial steps does not depend on the structure of the social network. Is it true that some social network structures allow faster convergence than others?  As a first cut at this question, we explore how the edge density of the social network affects dynamics. Let $G(n,p)$ be the Erd\H{o}s-Renyi random graph on $n$ nodes, where each pair of nodes is an edge with probability $p$ independently. How does $p$ change the time needed to converge? We study this question when the initial positions of the agents are uniform in the interval $[1,n]$. \cref{fig:exp} summarizes the results of our computer experiments.

We need to clarify what we mean by ``convergence time'' in the figure. As \cref{fig:initdep} shows, the time needed to converge can, in general, be arbitrarily long. However, it seems that if the initial positions of the agents are chosen randomly, such pathological cases occur with probability $0$, so that after a finite time, all agents are confined to an interval of length $10^{-6}$ ever after. We do not have a rigorous proof of this claim, but all our simulations support it. In view of this, the notion of convergence time we used to arrive at \cref{fig:exp} was, for any given $n$ and $p$, the least time $t$ at which the sum of the movement of all the $n$ agents is less than $10^{-6}$, averaged over $1000$ random initializations of the agents and random graphs from $G(n,p)$ . 

What is most interesting about \cref{fig:exp} is that there exists a value of $p$ between $0$ and $0.3$ for which the convergence time is maximized. When $p$ is close to $0$, the communication graph consists of many small disconnected components and convergence occurs fast. $p=1$ corresponds to the standard HK model. Somewhere in between, the time needed to converge reaches a maximum. The lesson seems to be that opinions take the longest time to converge when the probability of two agents interacting is neither too small nor too large. We also conducted the experiment when the social network was chosen from the Barab\'asi-Albert generative model for scale-free networks. The results there (not shown) are qualitatively similar.

Note that the maximizing value of $p$ seems to decrease slowly with $n$ in \cref{fig:exp}. It is not clear what the limiting behavior is as $n$ grows. Also, we currently do not have any analytical way to understand the experimental results.

\section{Non-Deterministic Hegselmann-Krause Model}
In this Section, we analyze the non-deterministic HK model. Recall that the update rule for the non-deterministic HK model is:
\begin{equation}\label{eqn:fund}
x_{t+1}(i)=x_{t}(i)+(1  + \eps_{i,t}) \frac{\sum_{j \in \cN_{t}(i)} (x_{t}(j) - x_{t}(i))}{|\cN_{t}(i)|}
\end{equation}
where $ \epsilon_{i,t} $ is an arbitrary number generated from the interval $(-\eps, \epsilon)$ at time $t$, for every $t \geq 0$. We analyze the time needed for convergence when $\eps$ is sufficiently small.

We first establish some notation. Let $\ell(t)$ be the index of the leftmost agent at time $t$. As already noted in the Introduction, $\ell(t)$ can change with $t$. Also, let $x_t(\ell)$ and $\cN_t(\ell)$  be shorthand for $x_t(\ell(t))$ and $\cN_t(\ell(t))$ respectively. Similarly, let $r(t)$ be the index of the rightmost agent at time $t$, and let $x_t(r)$ and $\cN_t(r)$ denote $x_t(r(t))$ and $\cN_t(r(t))$ respectively. 
\begin{lemma}\label{l1}
Suppose $\eps < \frac{1}{n-1}$. Then, for any agent $i \in [n]$ and for all $t \geq 0$, $x_{t+1}(i) \ge x_{t}(\ell)$. In particular, $x_{t+1}(\ell) \geq x_t(\ell)$. 
\end{lemma}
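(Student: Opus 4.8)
The plan is to analyze the one-step update of an arbitrary agent $i$ and show directly that it cannot land strictly to the left of the current leftmost position $x_t(\ell)$; the ``in particular'' statement then follows by instantiating the bound at $i = \ell(t+1)$ (or, equivalently, at the minimizer of $x_{t+1}(\cdot)$). So fix $t$ and $i$, set $k = |\cN_t(i)|$ and $S = \sum_{j \in \cN_t(i)}(x_t(j) - x_t(i))$, so that $x_{t+1}(i) = x_t(i) + (1+\eps_{i,t})\,S/k$. Since $i \in \cN_t(i)$ we have $k \geq 1$, and since $\eps < \frac{1}{n-1} \leq 1$ we have $1 + \eps_{i,t} > 1 - \eps > 0$.

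First I would dispose of the trivial case $S \geq 0$: then $(1+\eps_{i,t})S/k \geq 0$, so $x_{t+1}(i) \geq x_t(i) \geq x_t(\ell)$ because $\ell(t)$ is the leftmost index at time $t$. The substantive case is $S < 0$. Here $(1+\eps_{i,t})S/k$ is negative and is minimized over the allowed range of $\eps_{i,t}$ when the multiplier is as large as possible; since $0 < 1+\eps_{i,t} < 1+\eps$ and $S/k < 0$, this gives $(1+\eps_{i,t})S/k \geq (1+\eps)S/k$. Next, each summand satisfies $x_t(j) - x_t(i) \geq x_t(\ell) - x_t(i)$, and the $j=i$ summand equals $0$, so $S \geq (k-1)(x_t(\ell) - x_t(i))$. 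Combining these and writing $a = x_t(i) - x_t(\ell) \geq 0$, one obtains
$$x_{t+1}(i) \;\geq\; x_t(i) + (1+\eps)\tfrac{k-1}{k}\bigl(x_t(\ell) - x_t(i)\bigr) \;=\; x_t(\ell) + a\left(1 - (1+\eps)\tfrac{k-1}{k}\right).$$

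To finish I would verify that the coefficient $1 - (1+\eps)\tfrac{k-1}{k}$ is nonnegative: for $k = 1$ it is trivially $1$, and for $k \geq 2$ it is equivalent to $1 + \eps \leq \tfrac{k}{k-1}$, i.e.\ $\eps \leq \tfrac{1}{k-1}$, which holds because $k \leq n$ and $\eps < \tfrac{1}{n-1} \leq \tfrac{1}{k-1}$. Hence $x_{t+1}(i) \geq x_t(\ell) + a \cdot (\text{nonnegative}) \geq x_t(\ell)$. The only delicate point — the ``main obstacle,'' such as it is — is tracking the directions of the inequalities when replacing $\eps_{i,t}$ by the extreme value $\eps$ (which relies on knowing both $S < 0$ and $1 + \eps_{i,t} > 0$) and when lower-bounding $S$ by its smallest possible value; the hypothesis $\eps < \tfrac{1}{n-1}$ is precisely what is needed to cover the worst case $|\cN_t(i)| = n$.
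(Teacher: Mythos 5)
Your proof is correct and follows essentially the same route as the paper: bound the worst-case leftward drift of agent $i$ by $(1+\eps)\frac{k-1}{k}\,\bigl(x_t(i)-x_t(\ell)\bigr)$ and use $\eps < \frac{1}{n-1}$ to show this cannot carry $i$ past $x_t(\ell)$. The only difference is cosmetic: you keep $\frac{k-1}{k}$ and split on the sign of the drift, whereas the paper directly upper-bounds $\frac{k-1}{k}$ by $1-\frac1n$; the sign bookkeeping you make explicit is exactly what the paper's phrase ``can move to the left by at most $\delta(1-\frac1n)$'' compresses.
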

\begin{proof} 
Let $\delta =x_{t}(i)-x_{t}(\ell)$. At time $t+1$, without noise, agent $i$ can move to the left by at most $\delta (1-\frac{1}{n})$. By substituting in (\ref{eqn:fund}), we get:
$$x_{t+1}(i) \ge x_{t}(i) - (1 + \epsilon) \delta \cdot \left(1-\frac{1}{n}\right)$$
Since $\eps < \frac{1}{n-1}$, $x_{t+1}(i) \geq x_t(i) - \delta = x_t(\ell)$.
\end{proof}
\begin{remark}
By exactly the same reasoning, the position of the rightmost agent does not move to the right over time if $\eps < \frac{1}{n-1}$. 
\end{remark}

\begin{definition}For all $t \geq 0$, define the following sets:\\
\begin{align*}
L(t)&=\{i \in \cN_{t}(\ell) \mid \cN_{t}(i)=\cN_{t}(\ell) \}\\
S(t)&=\{i \in \cN_{t}(\ell) \mid \cN_{t}(i) \neq \cN_{t}(\ell) \}\\
T(t)&=\{ i \in [n] \mid  i \notin \cN_{t}(\ell) \}\\
M(t)&=S(t) \cup T(t)
\end{align*}
\end{definition}

We next show that any agent in $M(t)$ actually satisfies \cref{l1} with strict inequality. 

\begin{lemma} \label{lem:t}
For any agent $i \in T(t)$ and for all $t \geq 0$, $x_{t+1}(i) \ge x_{t}(\ell) + \frac{1}{n}-\epsilon$.
\end{lemma}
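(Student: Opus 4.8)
The plan is to give a direct lower bound on the one-step displacement $x_{t+1}(i)-x_t(i)$ of an agent $i\in T(t)$ and then combine it with the fact that such an agent starts far from the leftmost one. Write $\delta = x_t(i) - x_t(\ell)$. Since $i \notin \cN_t(\ell)$, by definition $|x_t(i)-x_t(\ell)| > 1$, and because $\ell$ is the leftmost agent this gives the strict inequality $\delta > 1$.

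First I would control the averaging term in (\ref{eqn:fund}). Every $j \in \cN_t(i)$ satisfies $x_t(j)-x_t(i) \ge -1$ by the definition of $\cN_t(i)$; the agent $i$ itself belongs to $\cN_t(i)$ and contributes $0$ to the sum; and $|\cN_t(i)| \le n$. Hence
$$\frac{1}{|\cN_t(i)|}\sum_{j \in \cN_t(i)}\bigl(x_t(j)-x_t(i)\bigr) \ \ge\ -\Bigl(1 - \tfrac{1}{|\cN_t(i)|}\Bigr) \ \ge\ -\Bigl(1 - \tfrac1n\Bigr),$$
with the left-hand side equal to $0$ in the degenerate case $|\cN_t(i)|=1$. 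This quantity, call it $a$, may be positive or negative; in either case, since $0 \le 1+\eps_{i,t} \le 1+\eps$, we get $(1+\eps_{i,t})\,a \ge -(1+\eps)\bigl(1-\tfrac1n\bigr)$ (when $a\ge 0$ this is trivial; when $a<0$ use $1+\eps_{i,t}\le 1+\eps$ together with $a \ge -(1-\tfrac1n)$). So the displacement of $i$ is at least $-(1+\eps)\bigl(1-\tfrac1n\bigr)$.

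Then I would simply assemble the pieces:
$$x_{t+1}(i) \ \ge\ x_t(i) - (1+\eps)\Bigl(1-\tfrac1n\Bigr) \ =\ x_t(\ell) + \delta - (1+\eps)\Bigl(1-\tfrac1n\Bigr) \ >\ x_t(\ell) + 1 - (1+\eps)\Bigl(1-\tfrac1n\Bigr),$$
and a one-line computation gives $1 - (1+\eps)\bigl(1-\tfrac1n\bigr) = \tfrac1n - \eps\bigl(1-\tfrac1n\bigr) \ge \tfrac1n - \eps$, which is exactly the claimed bound. This is a routine estimate; the only points needing a little care are using the \emph{strict} inequality $\delta>1$ (this is precisely what distinguishes agents of $T(t)$ from agents merely in $\cN_t(\ell)$), tracking the sign when multiplying the averaging term by $(1+\eps_{i,t})$, and observing that the hypothesis $\eps < \tfrac{1}{n-1}$ of \cref{l1} is not even needed here, since the bound holds for any $\eps \in [0,1]$. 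I anticipate no genuine obstacle.
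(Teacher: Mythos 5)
Your proof is correct and follows essentially the same route as the paper's: lower-bound the one-step leftward displacement of $i$ by $(1+\eps)\bigl(1-\tfrac{1}{|\cN_t(i)|}\bigr) \le (1+\eps)\bigl(1-\tfrac1n\bigr)$ and combine with $x_t(i) \ge x_t(\ell)+1$, which holds since $i \notin \cN_t(\ell)$ and $\ell$ is leftmost. Your added care about the sign of the averaging term and the degenerate case $|\cN_t(i)|=1$ is fine but does not change the argument.
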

\begin{proof} Consider $i \in T(t)$ and let $k = |\cN_t(i)|$. 
\begin{align*}
 x_{t+1}(i)  & = x_{t}(i) + (1 + \epsilon_{i,t}) \sum_{j \in \cN_{t}(i)} \frac{x_{t}(j)-x_{t}(i)}{|\cN_{t}(i)|}\\
 & \ge x_{t}(i) - (1 + \epsilon)\frac{k-1}{k} \\
 & \ge x_{t}(\ell) +1-(1 + \epsilon)(1-\frac{1}{n}) \\
 & \ge x_{t}(\ell) + \frac{1}{n} - \epsilon
\end{align*}
 \end{proof}
\begin{lemma}\label{lem:s}
For any agent $i \in S(t)$ and for all $t \geq 0$, $x_{t+1}(i) \ge x_{t}(\ell) + \frac{1}{n}-\epsilon$
\end{lemma}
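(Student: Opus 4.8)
The plan is to follow the template of the proof of \cref{lem:t}, but now to exploit the defining feature of $S(t)$: an agent $i\in S(t)$ is within distance $1$ of the leftmost agent, yet $\cN_t(i)$ \emph{strictly} contains $\cN_t(\ell)$. First I would record two elementary geometric facts. Put $\delta=x_t(i)-x_t(\ell)$; since $\ell$ is leftmost and $i\in\cN_t(\ell)$ we have $0\le\delta\le 1$. Because every agent has position at least $x_t(\ell)\ge x_t(i)-1$, the condition $|x_t(j)-x_t(i)|\le 1$ reduces to $x_t(j)\le x_t(i)+1$, and in particular $\cN_t(\ell)\subseteq\cN_t(i)$. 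Since $i\in S(t)$ this inclusion is proper, so $\cN_t(i)$ contains at least one agent $j$ with $x_t(j)>x_t(\ell)+1$ (it cannot lie in $[x_t(\ell),x_t(\ell)+1]$, or it would belong to $\cN_t(\ell)$).

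The heart of the argument is a lower bound on the noiseless mass center $c_i:=\frac1{|\cN_t(i)|}\sum_{j\in\cN_t(i)}x_t(j)$. Write $k=|\cN_t(i)|$ and $m=|\cN_t(\ell)|$. The $m$ agents of $\cN_t(\ell)\subseteq\cN_t(i)$ each contribute at least $x_t(\ell)$ to the sum (no agent lies to the left of $\ell$), while each of the remaining $k-m\ge 1$ agents of $\cN_t(i)$ contributes at least $x_t(\ell)+1$ by the previous paragraph. Hence $c_i\ge x_t(\ell)+\frac{k-m}{k}\ge x_t(\ell)+\frac1k\ge x_t(\ell)+\frac1n$. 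This is the only place where the extra agent ``seen'' by $i$ but not by $\ell$ is used, and the only thing to be careful about is the inclusion $\cN_t(\ell)\subseteq\cN_t(i)$, which legitimizes the count.

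It remains to absorb the noise. Rewriting the update rule (\ref{eqn:fund}) as $x_{t+1}(i)=c_i+\eps_{i,t}\,(c_i-x_t(i))$, observe that $c_i$ is an average of positions all lying in $[x_t(\ell),x_t(i)+1]$, so $-\delta\le c_i-x_t(i)\le 1$ and thus $|c_i-x_t(i)|\le 1$. Since $|\eps_{i,t}|<\eps$, we conclude $x_{t+1}(i)\ge c_i-\eps\ge x_t(\ell)+\frac1n-\eps$, which is exactly the claim. I do not anticipate any genuine obstacle beyond this bookkeeping; the single subtlety worth flagging is that, as in \cref{l1} and \cref{lem:t}, one must never invoke preservation of the agents' left-to-right order --- everything above is phrased in terms of positions measured against $x_t(\ell)$, not agent indices.
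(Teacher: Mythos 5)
Your proof is correct and follows essentially the same route as the paper's: the decisive point in both arguments is that an agent $i\in S(t)$ sees at least one agent beyond $x_t(\ell)+1$, which forces the noiseless target of $i$ to lie at least $\tfrac1n$ to the right of $x_t(\ell)$, with the non-determinism costing at most $\eps$. Your mass-center bookkeeping (together with making the inclusion $\cN_t(\ell)\subseteq\cN_t(i)$ explicit) is just a slightly cleaner rendering of the paper's term-by-term lower bound on the displacement sum, so no genuinely different ideas are involved.
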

\begin{proof} Consider $i \in S(t)$.  Let $k= |\cN_{t}(\ell)|$ and $ \delta= x_{t}(i)-x_{t}(\ell)$.
Then agent \textit{i} moves to its left by at most:
\begin{align}
x_{t+1}(i) & \geq x_{t}(i) - ( 1 + \epsilon_{i,t} ) \frac{ \delta ( k-2 ) + 0 - (1 - \delta)}{k}  \label{eqn:chg1} \\
& \geq x_{t}(i) - ( 1 + \epsilon_{i,t} ) \frac{ \delta ( k-1 ) + 0 - (1 - \delta)}{k} \\
&= x_{t}(i) - (1 + \epsilon_{i,t}) ( \delta - \frac{1}{k} ) \nonumber \\
& \geq x_{t}(\ell) + \delta - (1 + \epsilon_{i,t}) ( \delta - \frac{1}{n} ) \nonumber \\
 & \geq x_{t}(\ell) + \frac{1}{n}- \epsilon \nonumber
 \end{align}
 \end{proof}
 Combining \cref{lem:t} and \cref{lem:s}:
 \begin{corollary}\label{c1}
For any agent $i \in M(t)$ and $t \geq 0$, $x_{t+1}(i) \ge x_{t}(\ell) + \frac{1}{n}-\epsilon$
 \end{corollary}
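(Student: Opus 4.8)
The plan is to observe that the corollary is pure bookkeeping on top of the two preceding lemmas. By definition $M(t) = S(t) \cup T(t)$, and this union is in fact disjoint, since $S(t) \subseteq \cN_t(\ell)$ while every element of $T(t)$ lies outside $\cN_t(\ell)$. Hence it suffices to take an arbitrary $i \in M(t)$ and split into the two exhaustive cases $i \in S(t)$ and $i \in T(t)$.

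In the case $i \in T(t)$, \cref{lem:t} gives directly $x_{t+1}(i) \ge x_t(\ell) + \tfrac1n - \epsilon$. In the case $i \in S(t)$, \cref{lem:s} gives the identical bound $x_{t+1}(i) \ge x_t(\ell) + \tfrac1n - \epsilon$. Since these two cases cover all of $M(t)$, the inequality holds for every $i \in M(t)$, which is exactly the claim. There is no real obstacle here: the entire quantitative content has already been established in \cref{lem:t} and \cref{lem:s}, and the only thing to verify is that both lemmas produce the same right-hand side $x_t(\ell) + \tfrac1n - \epsilon$, so that no weakening of the bound is needed when the two cases are merged.

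It is worth recording, for the subsequent analysis, what this corollary buys us: combined with \cref{l1} (which already guarantees $x_{t+1}(i) \ge x_t(\ell)$ for all $i$), it says that after one step every agent not in $L(t)$ has moved strictly to the right of the old leftmost position by at least $\tfrac1n - \epsilon$. Equivalently, up to the $\tfrac1n - \epsilon$ slack, the only agents that can remain at the old leftmost position are those whose neighborhood coincides with $\cN_t(\ell)$; this is the structural separation between $L(t)$ and $M(t)$ that drives the rest of the convergence argument for non-deterministic HK.
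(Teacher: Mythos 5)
Your proof is correct and matches the paper's, which simply combines \cref{lem:t} and \cref{lem:s} over the disjoint decomposition $M(t) = S(t) \cup T(t)$. Nothing further is needed.
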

 \begin{lemma}\label{l5} Suppose $\eps < \frac{1}{n-1}$.  For any $t \geq 0$, if $S(t) = \emptyset$, then $\cN_{t}(\ell) = L(t) $ evolves as an independent system and for any $\rho > 0$, all the agents in $L(t)$ lie within an interval of length at most $\rho$  in time $O((\log 1/\rho)/(\log 1/\eps))$ time.
 \end{lemma}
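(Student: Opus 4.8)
The plan is to first extract the rigid geometry forced by $S(t)=\emptyset$, then show this geometry is preserved by the dynamics (this is the content of ``$L(t)$ evolves as an independent system''), and finally read off the geometric contraction rate. For the structure: since $S(t)=\emptyset$ we have $\cN_t(\ell)=L(t)$, and as $\ell$ is the global leftmost agent, $L(t)$ is exactly the set of agents lying in $[x_t(\ell),x_t(\ell)+1]$. Put $D:=\max_{i\in L(t)}x_t(i)-x_t(\ell)\le1$ and pick $i^\ast\in L(t)$ with $x_t(i^\ast)=x_t(\ell)+D$. Then no agent lies in $(x_t(\ell)+D,\,x_t(\ell)+1]$ (such an agent would be in $\cN_t(\ell)=L(t)$, contradicting the choice of $D$), and since $S(t)=\emptyset$ forces $\cN_t(i^\ast)=\cN_t(\ell)=L(t)$, every $j\notin L(t)$ satisfies $|x_t(j)-x_t(i^\ast)|>1$, hence $x_t(j)>x_t(\ell)+1+D$. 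So in $C_{x_t}$ the set $L(t)$ is a clique component (each of its agents sees exactly $L(t)$), with no edge to $[n]\setminus L(t)$, and every agent outside $L(t)$ lies to the right of $x_t(\ell)+1+D$.

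\textbf{Persistence and rate.} I would then induct on $s\ge t$ with three invariants: (i) in $C_{x_s}$, $L(t)$ is a clique component with no edge to $[n]\setminus L(t)$; (ii) $\max_{i\in L(t)}x_s(i)\le x_t(\ell)+D$; (iii) $\min_{j\notin L(t)}x_s(j)>x_t(\ell)+1+D$; the base case is the paragraph above. For the inductive step, write $\Delta_s:=\max_{i,i'\in L(t)}|x_s(i)-x_s(i')|$ and $k:=|L(t)|$ (if $k=n$ there is no complement and only (i) needs checking; otherwise $k\le n-1$, so $\eps<\tfrac1{n-1}\le\tfrac1k$). By (i), the $L(t)$-agents and the complementary agents do not interact at step $s$, so each group runs as a bona fide non-deterministic HK system on at most $n$ agents, and \cref{l1} applies to each; for the complement this gives (iii) at time $s+1$ from (iii) at time $s$. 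For $L(t)$: by (i) every $i\in L(t)$ has $\cN_s(i)=L(t)$, so with $c_s:=\tfrac1k\sum_{i\in L(t)}x_s(i)$ the update is $x_{s+1}(i)=c_s+\eps_{i,s}(c_s-x_s(i))$, which lies within $\eps\Delta_s$ of $c_s$. Combining $c_s\le\min_{i\in L(t)}x_s(i)+(1-\tfrac1k)\Delta_s$ with $\eps\le\tfrac1k$ yields $\max_{i\in L(t)}x_{s+1}(i)\le\max_{i\in L(t)}x_s(i)$, re-establishing (ii); and $\Delta_{s+1}\le2\eps\Delta_s<\Delta_s$ keeps the $L(t)$-cluster of diameter $<1$, so it is still a clique. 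Then (ii) and (iii) at $s+1$ put the whole $L(t)$-cluster at distance $>1$ from every other agent, giving (i) at $s+1$. Thus $L(t)$ evolves forever as an isolated $k$-agent system in which, at every step, all agents see one another; in particular $\Delta_{s+1}\le2\eps\Delta_s$, so $\Delta_{t+m}\le(2\eps)^m\Delta_t\le(2\eps)^m$, which is $\le\rho$ once $m\ge\log(1/\rho)/\log(1/2\eps)=O(\log(1/\rho)/\log(1/\eps))$ (using that $\eps$ is safely below $\tfrac12$, so $\log(1/2\eps)=\log(1/\eps)-\log2=\Omega(\log(1/\eps))$).

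\textbf{Main obstacle.} I expect the persistence step to be the crux. The worry is that a complementary agent sitting just outside $\cN_t(i^\ast)$ can drift left by nearly a full unit in one step, while the $L(t)$-cluster may sit as far right as $x_t(\ell)+D\approx x_t(\ell)+1$, so naively the two could collide and the split would fail already at $t+1$. What rescues it is the one-sided monotonicity in (ii): the rightmost agent of $L(t)$ moves \emph{toward} the common center $c_s$, which is strictly to its left as soon as $\eps\le1/k$, so $\max_{i\in L(t)}x_s(i)$ can only decrease; pairing this with the \cref{l1} monotonicity that the leftmost complementary agent never drops below $x_t(\ell)+1+D$ keeps the gap between the two groups strictly above $1$ for all time. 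Arranging the induction so that these two one-sided monotonicities can be invoked at every step without the two subsystems' analyses becoming circular is the delicate bookkeeping.
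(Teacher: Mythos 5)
Your proposal is correct and follows essentially the same route as the paper: all agents in $L(t)$ share a common mass center, so after one update every position lies within $\eps\Delta$ of that center, giving the diameter contraction $\Delta_{s+1}\le 2\eps\Delta_s$ and the $O(\log(1/\rho)/\log(1/\eps))$ bound, together with the observation that the neighborhoods of $L(t)$ never change so the cluster evolves independently. The only difference is that you spell out, via your invariants (i)--(iii) and the use of \cref{l1} on the complementary subsystem, the persistence argument that the paper compresses into its Observation 2, which is a welcome bit of extra rigor rather than a different approach.
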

 \begin{proof} 
 Because  $S(t) = \emptyset$, $\cN_{t}(\ell) =\cN_{t}(i) $ $ \forall \textit{i} \in \cN_{t}(\ell) $. Then, by definition,
\begin{align*}
x_{t+1}(i)
&=a_{t+1}(i) + \epsilon_{i,t} b_{t+1}(i)
\end{align*}
where $ a_{t+1}(i)=\frac{\sum_{j \in \cN_{t}(i)} x_{t}(j)}{|\cN_{t}(i)|}$ and $ b_{t+1}(i)= \frac{\sum_{j \in \cN_{t}(i)} (x_{t}(j) - x_{t}(i))}{|\cN_{t}(i)|}$
\newtheorem{obs}{Observation}
\begin{obs}
For any $i,j \in \cN_{t}(\ell)$, we have $\cN_{t}(i)=\cN_{t}(j) \implies$ $a_{t+1}(i)=a_{t+1}(j)$
\end{obs}
\begin{obs}
Since $\eps < \frac{1}{n-1}$, by \cref{l1}, for any agent $i \in \cN_{t}(\ell)$, $\cN_{t+s}(i)=\cN_{t}(i)$ for all positive integers $s$.
\end{obs}

Let $\delta_{t} := x_{t}(r)-x_{t}(\ell) \leq 1$. Then:
\begin{align*}
\delta_{t+1} 
= \epsilon_{r,t+1} b_{t+1}(r(t+1)) - \epsilon_{\ell,t+1} b_{t+1}(\ell(t+1))
\leq 2 \epsilon \cdot b_{t+1}^{\max}
\end{align*}
where
$b_{t+1}^{\max} =\max_{i}(|b_{t+1}(i)|) \leq \delta_t$.
Hence $ \delta_{t+1} \leq 2 \epsilon \delta_{t}$. Therefore, for any $\rho > 0$, all the agents will lie within an interval of length $\rho$ in time $O(\log(1/\rho)/\log(1/\eps))$. 
\end{proof}
\begin{lemma} At any time $t \geq 0$, one of the following three cases must occur:
\begin{enumerate}
\item[S1)] $S(t+1) = \emptyset$
\item[S2)] $|L(t+1)| < |L(t)| $
\item[S3)] $M(t) \cap \cN_{t+1}(\ell) \neq \emptyset$
\end{enumerate}
\end{lemma}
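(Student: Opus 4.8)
The plan is to argue by contradiction: assume that at time $t+1$ none of S1, S2, S3 holds, i.e., $S(t+1) \neq \emptyset$, $|L(t+1)| \geq |L(t)|$, and $M(t) \cap \cN_{t+1}(\ell) = \emptyset$. The last condition says that at time $t+1$ the leftmost agent $\ell(t+1)$ interacts with no agent of $M(t) = S(t) \cup T(t)$; equivalently, $\cN_{t+1}(\ell) \subseteq L(t) \cup (\text{agents not in } \cN_t(\ell))$, but by \cref{c1} every agent of $M(t)$ is at position $\geq x_t(\ell) + \tfrac1n - \eps$ at time $t+1$, while by \cref{l1} the leftmost agent at time $t+1$ is at position $\geq x_t(\ell)$; I want to pin down that $\ell(t+1)$ must in fact come from $L(t)$ and be close to $x_t(\ell)$. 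First I would establish that $\ell(t+1) \in L(t) \cup \{\ell(t)\}$: any agent outside $\cN_t(\ell) \cup \{\ell(t)\}$ is in $T(t)$ or $S(t)$ (or is $\ell$ itself), hence in $M(t)$, hence by \cref{c1} sits at $\geq x_t(\ell)+\tfrac1n-\eps$ at time $t+1$, which (for $\eps$ small) is strictly larger than $x_{t+1}(\ell(t))$, so such an agent cannot be leftmost. Since $\ell(t) \in L(t) \cup \{\text{singleton}\}$ anyway — note $\ell(t)$'s neighborhood is $\cN_t(\ell)$ so $\ell(t) \in L(t)$ — we get $\ell(t+1) \in L(t)$.

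Next I would use the structure of $L(t)$. All agents in $L(t)$ have the identical neighborhood $\cN_t(\ell)$, so their deterministic update target $a_{t+1}(i)$ is the same common value (this is the Observation in \cref{l5}); they differ only by the noise term $\eps_{i,t} b_{t+1}(i)$, which has magnitude $\leq \eps \cdot \mathrm{diam} \leq \eps$. Hence at time $t+1$ all of $L(t)$ lies in an interval of length $\leq 2\eps$. Now, crucially, I claim every agent of $L(t)$ is within distance $1$ of $\ell(t+1)$ at time $t+1$: they are within $2\eps < 1$ of each other. So $L(t) \subseteq \cN_{t+1}(\ell)$. Combined with $M(t) \cap \cN_{t+1}(\ell) = \emptyset$ and the fact that $[n] = L(t) \cup M(t)$, this forces $\cN_{t+1}(\ell) = L(t)$ exactly. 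But then every $i \in L(t) = \cN_{t+1}(\ell)$ has $\cN_{t+1}(i) \supseteq L(t)$; I must show $\cN_{t+1}(i) = L(t)$ as well, i.e., that no agent of $M(t)$ enters the neighborhood of any $i \in L(t)$ at time $t+1$. This should follow from a quantitative separation estimate: agents of $L(t)$ end up near $x_{t}(\ell) + (\text{common displacement})$, which is at most $x_t(\ell)+1$-ish, while agents of $M(t)$, being pushed by \cref{c1} and staying bounded by the global diameter, can be shown to remain more than distance $1$ from the tight cluster $L(t)$ — here I would lean on $\eps < \tfrac1{4n^2}$ to make all the $O(\eps)$ slacks negligible against the $\tfrac1n$ gap. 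If $\cN_{t+1}(i) = L(t)$ for all $i \in L(t)$, that is precisely the condition $S(t+1) = \emptyset$ (no agent in $\cN_{t+1}(\ell)$ has a different neighborhood), contradicting our assumption S1-fails.

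That leaves reconciling with the assumption that S2 fails, i.e., $|L(t+1)| \geq |L(t)|$ — in the scenario just derived we actually get $L(t+1) = \cN_{t+1}(\ell) = L(t)$, so $|L(t+1)| = |L(t)|$, consistent; the contradiction is purely with S1. So the real content is: \emph{if} $M(t) \cap \cN_{t+1}(\ell) = \emptyset$, \emph{then} $S(t+1) = \emptyset$, which already suffices (S2 is there to handle the case $|L(t+1)|<|L(t)|$ in the complementary branch of the later convergence argument, not within this lemma). I would therefore restructure the proof as: assume S3 fails; deduce $\ell(t+1)\in L(t)$ and $L(t)\subseteq \cN_{t+1}(\ell)$; deduce $\cN_{t+1}(\ell)=L(t)$; then show each $i\in L(t)$ has $\cN_{t+1}(i)=L(t)$, giving S1. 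The main obstacle is the last separation estimate — carefully showing that agents of $M(t)$, which start outside $\cN_t(\ell)$ or have a neighbor outside it, genuinely stay at distance $>1$ from the $O(\eps)$-tight cluster $L(t)$ after one noisy step, controlling all the error terms against $\tfrac1n$ using $\eps<\tfrac1{4n^2}$. I expect one must also handle the subtlety that $T(t)$ agents could in principle move \emph{left} toward the cluster; \cref{lem:t}'s lower bound $x_{t+1}(i)\geq x_t(\ell)+\tfrac1n-\eps$ is exactly what rules that out, so the estimate should go through.
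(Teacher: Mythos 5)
There is a genuine gap: your restructured claim, ``if $M(t) \cap \cN_{t+1}(\ell) = \emptyset$ then $S(t+1) = \emptyset$,'' is both unproven (you yourself flag the final separation estimate as the main obstacle) and not a consequence of the hypotheses. Failure of S3 only guarantees that every agent of $M(t)$ is at distance $> 1$ from $\ell(t+1)$ at time $t+1$; but the agents of $L(t)$, while sharing a common deterministic target, are spread over an interval of width up to $2\eps$ by the noise. An agent of $M(t)$ sitting at distance in $(1, 1+2\eps]$ from $\ell(t+1)$ can therefore still be within distance $1$ of some other $i \in L(t)$, which puts that $i$ into $S(t+1)$, so $S(t+1) \neq \emptyset$ even though S3 fails. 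Nothing in the dynamics (and nothing provable from $\eps < \frac{1}{4n^2}$) excludes this boundary configuration; the option S2 exists in the lemma precisely to absorb it: in that situation $L(t+1) = \cN_{t+1}(\ell) \setminus S(t+1) \subseteq L(t) \setminus S(t+1) \subsetneq L(t)$, so $|L(t+1)| < |L(t)|$. A secondary issue: your positional argument that $\ell(t+1) \in L(t)$ (comparing $x_{t+1}$ of $M(t)$-agents against $x_{t+1}(\ell(t))$) is not justified as stated, since $\ell(t)$ may move right by much more than $\frac{1}{n}-\eps$; however, under the assumption that S3 fails this point is immediate, because $\ell(t+1) \in \cN_{t+1}(\ell)$ always, so $\ell(t+1) \notin M(t)$, i.e.\ $\ell(t+1) \in L(t)$.

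The intended proof is much simpler and purely set-theoretic, with no quantitative estimates at all: if S3 fails, then $\cN_{t+1}(\ell) \cap M(t) = \emptyset$, and since $[n] = L(t) \cup M(t)$ this gives $\cN_{t+1}(\ell) \subseteq L(t)$. Since $L(t+1)$ and $S(t+1)$ partition $\cN_{t+1}(\ell)$, either $S(t+1) = \emptyset$ (case S1), or $S(t+1)$ is a nonempty subset of $L(t)$ disjoint from $L(t+1)$, whence $L(t+1) \subseteq L(t) \setminus S(t+1) \subsetneq L(t)$ and $|L(t+1)| < |L(t)|$ (case S2). You should drop the attempted strengthening and the separation estimate, and argue along these lines; the quantitative work belongs to \cref{lem:s3} and \cref{l5}, not here.
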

\begin{proof}
Assume Case S3 does not occur, meaning for all $s \in M(t)$, $s \notin \cN_{t+1}(\ell)$. Then, we show that either case S1 or S2 occurs. 

By our assumption, $\cN_{t+1}(\ell) \subseteq \cN_{t}(\ell) \setminus S(t) = L(t)$. 
Hence, $L(t+1) = \cN_{t+1}(\ell) \backslash S(t+1) \subseteq L(t) \backslash S(t+1)$.
Now, either
\begin{itemize}
\item $S(t+1) = \emptyset$, which is case S1. 
\item $S(t+1) \neq \emptyset$, and so, $L(t+1) \subseteq L(t) \setminus S(t+1) \subsetneq L(t)$, implying case S2.
\end{itemize}
\end{proof}
Note that case S1 of the above Lemma is irreversible in the sense that by \cref{c1}, each time it occurs, a subset of the agents converges independently into an interval of an arbitrarily small length $\rho$.  We next establish that case S3 can also occur only a finite number of times.

\begin{lemma}\label{lem:s3}
Suppose $\eps < \frac{1}{4n^2}$.
If $M(t) \cap \cN_{t+1}(\ell) \neq \emptyset$, then for all $i \in [n]$, $x_{t+2}(i) \geq x_{t}(\ell) + \frac{1}{4n^2}$.
\end{lemma}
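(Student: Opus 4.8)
The plan is to show that when some agent $s \in M(t)$ enters $\cN_{t+1}(\ell)$, the leftmost agent at time $t+2$ is forced to ``inherit'' a large displacement from $s$, via the one-step movement bounds already established. By \cref{c1}, every $i \in M(t)$ satisfies $x_{t+1}(i) \ge x_t(\ell) + \frac1n - \eps$, which is at least $x_t(\ell) + \frac{1}{2n}$ since $\eps < \frac1{4n^2} \le \frac1{2n}$. The hypothesis $M(t) \cap \cN_{t+1}(\ell) \ne \emptyset$ gives us some such agent $s$ that is within distance $1$ of the leftmost agent at time $t+1$, hence $x_{t+1}(\ell) \ge x_{t+1}(s) - 1 \ge x_t(\ell) + \frac{1}{2n} - 1$; but more usefully, $s$ is a genuine neighbor of $\ell(t+1)$ and sits strictly to the right of it.

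First I would fix $\ell' = \ell(t+1)$, the leftmost agent at time $t+1$, and note $x_{t+1}(\ell') \ge x_t(\ell)$ by \cref{l1}. Next I would apply the update rule \eqref{eqn:fund} at time $t+1$ to an arbitrary agent $i$: since $\eps < \frac1{n-1}$, \cref{l1} (applied with base time $t+1$) already gives $x_{t+2}(i) \ge x_{t+1}(\ell') \ge x_t(\ell)$, so the work is to extract the extra additive gain of $\frac1{4n^2}$. The key is that the leftmost agent at time $t+2$, call it $\ell'' = \ell(t+2)$, either lay in $\cN_{t+1}(\ell')$ at time $t+1$ — in which case its neighborhood contains $s$ (or at least an agent far to the right of $x_t(\ell)$), forcing a rightward pull of order $\frac1n$ that dominates the $\eps$-scale noise — or it did not, in which case $x_{t+1}(\ell'') \ge x_{t+1}(\ell') + \text{(something)}$ and I would track that slack. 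The cleanest route: show that for \emph{every} $i$, either $x_{t+1}(i)$ already exceeds $x_t(\ell) + \frac{1}{4n^2} + \eps(n-1)/n$ (so that even the worst leftward move at step $t+1$ keeps $x_{t+2}(i)$ above $x_t(\ell) + \frac{1}{4n^2}$, mimicking the proof of \cref{l1}), or $i \in \cN_{t+1}(\ell')$ and $i$'s neighborhood at time $t+1$ meets $M(t)$ (since $s \in \cN_{t+1}(\ell') \subseteq \cN_{t+1}(i)$ whenever $i$ is close enough to $\ell'$), which injects a term of size $\ge \frac{1}{2n}\cdot\frac1n$ into $x_{t+2}(i) - x_{t+1}(i)$, again beating $\eps \le \frac1{4n^2}$.

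The main obstacle I anticipate is the bookkeeping in the second case: I need to argue that if $i$ is a neighbor of $\ell'$ at time $t+1$, then $i$ is also a neighbor of $s$ at time $t+1$ (so that $s$'s rightward position of $\ge x_t(\ell) + \frac1{2n}$ actually contributes to $i$'s update). This is not automatic — $i$ and $s$ are both within $1$ of $\ell'$ but could be up to $2$ apart. The fix is to use that $s \in M(t)$ was at most at distance $1$ from $x_t(\ell)$ if $s \in S(t)$ (and arbitrarily placed only if $s \in T(t)$), combined with \cref{l1} controlling how far anything can drift leftward; alternatively, one partitions on whether $i$ itself lies in $L(t)$, $S(t)$, or $T(t)$ and applies \cref{lem:t}/\cref{lem:s}/\cref{c1} to get the stronger starting point $x_{t+1}(i) \ge x_t(\ell) + \frac1n - \eps$ for all $i \in M(t)$, reducing the genuinely delicate case to $i \in L(t) \setminus \{\ell(t)\}$ and $i = \ell(t)$ itself, where a direct calculation with \eqref{eqn:fund} — carefully bounding $|\cN_{t+1}(i)|$ by $n$ and isolating the contribution of the agent that witnesses membership in $\cN_{t+1}(\ell)$ — yields the claimed $\frac1{4n^2}$ after absorbing the two noise terms of magnitude at most $\eps \le \frac1{4n^2}$. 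Throughout, the constant $\frac1{4n^2}$ rather than $\frac1{n}$ reflects exactly this double loss: one factor of $n$ from averaging over a neighborhood, and the need to leave room for the $\eps$-noise at step $t+1$.
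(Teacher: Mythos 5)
There is a genuine gap: the dichotomy you call the ``cleanest route'' does not close. First, the threshold in your first branch is too weak by roughly a factor of $n$. The worst leftward move of a non-leftmost agent is not bounded by the noise scale $\eps(n-1)/n$; the averaging itself can contract the agent's lead over the current leftmost agent by a factor $(1+\eps_{i,t+1})(1-\frac1n)$. If $x_{t+1}(i)=x_t(\ell)+\frac{1}{4n^2}+\eps(n-1)/n$, then at time $t+2$ the agent may retain only about a $\frac1n$-fraction of that lead over $x_{t+1}(\ell)$, i.e.\ be only $O(1/n^3)$ above $x_{t+1}(\ell)$, and $x_{t+1}(\ell)$ may still equal $x_t(\ell)$; so this branch fails unless the lead is already of order $\frac1n$, not of order $\eps$. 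Second, in your other branch the claim that $s\in\cN_{t+1}(i)$ ``injects a term of size $\ge\frac{1}{2n}\cdot\frac1n$ into $x_{t+2}(i)-x_{t+1}(i)$'' is not valid as stated: the remaining neighbors of $i$ can contribute negative terms of total magnitude up to $1$, so one far-right neighbor says nothing about the sign or size of $i$'s displacement. What does work is to lower-bound the mean of $\cN_{t+1}(i)$ against $x_{t+1}(\ell(t+1))$ (every neighbor is at least the leftmost position, and $s$ exceeds it by the gap $x_{t+1}(s)-x_{t+1}(\ell)$), but this yields a gain of order $\frac{1}{n^2}$ only if that gap is still $\Omega(\frac1n)$ at time $t+1$ — and ensuring this requires exactly the case split your sketch lacks and the paper makes: either $x_{t+1}(\ell)>x_t(\ell)+\frac{1}{2n}-\eps$, in which case \cref{l1} applied at time $t+1$ finishes for all $i$ at once, or $x_{t+1}(\ell)\le x_t(\ell)+\frac{1}{2n}-\eps$, in which case \cref{c1} gives $x_{t+1}(s)-x_{t+1}(\ell)\ge\frac{1}{2n}-\eps$ and the mean argument gives $x_{t+2}(i)\ge x_{t+1}(\ell)+\frac{1}{2n^2}-\eps$.

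The obstacle you flag (whether $i\in\cN_{t+1}(\ell(t+1))$ is also a neighbor of $s$) is real, but your proposed repair via the time-$t$ sets $L(t),S(t),T(t)$ is the hard path and is left unexecuted; the ``direct calculation'' for $i\in L(t)$ is precisely where the argument must be done and is missing. The paper sidesteps the issue by partitioning $[n]$ at time $t+1$ rather than at time $t$: for $i\in L(t+1)$ one has $\cN_{t+1}(i)=\cN_{t+1}(\ell)$, which contains $s$ by hypothesis, so the mean bound applies verbatim; and for $i\in M(t+1)$, \cref{c1} applied at time $t+1$ gives $x_{t+2}(i)\ge x_{t+1}(\ell)+\frac1n-\eps\ge x_t(\ell)+\frac{1}{4n^2}$. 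Without this device (or an equivalent one) your sketch does not deliver the uniform bound over all $i\in[n]$.
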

\begin{proof} Suppose $s \in M(t)\cap \cN_{t+1}(\ell)$ . Then, by \cref{c1}, 
$x_{t+1}(s)-x_{t}(\ell) \geq \frac{1}{n}-\epsilon \geq \frac{1}{n}-2\epsilon$. Now, we consider two cases.  

\begin{itemize}
\item Suppose $x_{t+1}(\ell) > x_{t}(\ell) + \frac{1}{2n}-\epsilon$. 
Then, for any $i$, by \cref{l1}, $x_{t+2}(i) \geq x_{t+1}(\ell) > x_{t}(\ell) + \frac{1}{2n}-\epsilon \geq x_{t}(\ell) + \frac{1}{4n}$ ($\because \epsilon < \frac{1}{4n^2}$).

\item Otherwise, suppose $x_{t+1}(\ell) \leq x_{t}(\ell) + \frac{1}{2n} - \epsilon$. Then, $x_{t+1}(s) - x_{t+1}(\ell) \geq \frac{1}{2n} - \epsilon$.
Now for any agent $i \in L(t+1)$, we have:
\begin{align}
x_{t+2}(i)  
&= x_{t+1}(i) + (1 + \epsilon_{i,t+1}) \sum_{j \in \cN_{t+1}(i)} \frac{x_{t+1}(j)-x_{t+1}(i)}{|\cN_{t+1}(i)|} \label{eqn:chg2}\\
&=\sum_{j \in \cN_{t+1}(\ell)} \frac{x_{t+1}(j)}{|\cN_{t+1}(l)|} + \epsilon_{i,t+1} \sum_{j \in \cN_{t+1}(i)} \frac{(x_{t+1}(j) - x_{t+1}(i))}{|\cN_{t+1}(i)|} && (\because i \in L(t+1)) \nonumber \\
&\ge x_{t+1}(\ell) + \frac{x_{t+1}(s)-x_{t+1}(\ell)}{n} - \epsilon (1-\frac{1}{n}) \nonumber\\
&\ge x_{t+1}(\ell) + \frac{(\frac{1}{2n}-\epsilon)}{n}-\epsilon (1-\frac{1}{n}) \nonumber\\
&\ge x_{t+1}(\ell) + \frac{1}{2n^2} - \epsilon \nonumber
\end{align}
If $\epsilon < \frac{1}{4n^2}$, then $x_{t+2}(i) > x_{t+1}(\ell) + \frac{1}{4n^2} \geq x_{t}(\ell) + \frac{1}{4n^2}$, by \cref{l1}. If $i \in M(t+1)$, then by \cref{c1}, we have $ x_{t+2}(i) \geq x_{t+1}(\ell) + \frac{1}{2n} \geq x_{t}(\ell) + \frac{1}{2n}$, the second inequality again due to \cref{l1}. So, the claim is proved for all $i$.
\end{itemize}
\end{proof}
We are now ready to prove our main theorem.\\

\noindent 
\textbf{\cref{thm:ndmain} (recalled)}
{\em Suppose $\eps < \frac{1}{4n^2}$.  Starting with an arbitrary initial arangement of $n$ agents evolving according to  the non-deterministic HK model, the number of steps before which all the agents are confined within intervals of length $\rho$ thereafter is $O(n^4 + \log(1/\rho)/\log n)$.}
\begin{proof} Since for any $t$, $|L(t)| \leq n$, case S2 can occur consecutively at most $n$ times.  So, within every $n$ time steps, case {S1} or  case {S3} must occur at least once. Case S1 can clearly occur at most $n$ times, whereas by \cref{lem:s3}, case S3 can occur $O(n^3)$ times. Hence,  after time $O(n^4)$ time steps, all agents lie in independent subsystems, each of diameter at most $1$. Each of these subsystems, by \cref{l5}, cluster into intervals of length at most $\rho$ in $O(\log(1/\rho)/\log(1/\eps)) = O(\log(1/\rho)/\log n)$ time steps.
\end{proof}

\begin{remark}\label{rem:ndgen}
Suppose we change the definition of non-deterministic HK models so that each agent is influenced non-uniformly by its neighbors. Specifically, let the update rule be:
\begin{equation}
x_{t+1}(i) = x_t(i) + \frac{\sum_{j \in \cN_t(i)} (1+\eps_{i,j,t}) (x_t(j) - x_t(i))}{|\cN_t(i)|} \label{eqn:nd2}
\end{equation}
where each $\eps_{i,j,t}$ is an arbitrary\footnote{perhaps generated endogenously} number generated from the interval $(-\eps, \eps)$. Most of the above proof needs no modification. (\ref{eqn:chg1}) changes to:
\begin{align*}
x_{t+1}(i) & \geq x_{t}(i) - \frac{ ( 1 + \eps)\delta ( k-2 ) + 0 - (1- \eps) (1 - \delta)}{k}  \\
& \geq x_{t}(i) - \frac{ ( 1 + \eps)\delta ( k-1 ) + 0 - (1- \eps) (1 - \delta)}{k} \\
& = x_{t}(\ell) +\frac{1}{k} -\eps \left(\delta -2\delta/k  + 1/k\right)\\
&\geq x_t(\ell) + \frac{1}{n} - 	2\eps \geq x_t(\ell) + \frac{1}{2n}
\end{align*}
if $\eps < \frac1{4n^2}$. 
Everywhere else, the claims hold straighforwardly using the upperbound of $\eps$ on each $\eps_{i,j,t}$. Hence, for this system also, \cref{thm:ndmain} holds.

An interesting aspect about (\ref{eqn:nd2}) is that an agent might move in the opposite direction than it would move in the classical HK model, whereas in (\ref{eqn:fund}), the direction of each agent's movement is the same as in classical HK.
\end{remark}

\section{Future Directions}
There are a number of open directions suggested by the problems studied in this work.
\begin{itemize}
\item
In our formulation of the social HK model, we required the underlying social network to be undirected. This leads to bidirectional dynamical systems. What happens if the social network is directed? 
  Proving convergence for the HK  model with a directed social network seems quite challenging because it includes, as a special case, the ill-understood HK model with stubborn agents (i.e., there are some agents with confidence bound 0 while all others have confidence bound 1). To see this, let every non-stubborn agent have edges to all agents and every stubborn agent have no outgoing edges. 
  
 \item
 We introduced the notion of friendly social HK systems in \cref{sec:friendly} and showed that these allow only a polynomial number of non-trivial steps. We conjecture that friendliness is necessary for a polynomial bound. Can one demonstrate a non-friendly HK system for which there are an exponential number of non-trivial steps? Is friendliness indeed a tight condition for a polynomial bound?

\item
Is there a rigorous justification for the empirical results reported in \cref{sec:exp}? In general, it would be interesting to understand the effect of the social network structure on the dynamics of the HK model. 

\item
For the non-deterministic HK model, it is important to increase the range of $\eps$ for which \cref{thm:ndmain} is valid. Note that if $\eps$ is allowed to be in $[-1,0]$, then we could prove convergence for the HK system with stubborn agents (by simply setting $\eps_{i,t} = -1$ for all $t$ if agent $i$ is stubborn and $\eps_{i,t}=0$ otherwise). Moreover, in the generalized non-deterministic HK systems described in \cref{rem:ndgen}, if $\eps$ is allowed to be in $[-1,n-1]$, then we can simulate arbitrary heterogeneous HK systems (by setting $\eps_{i,j,t} = -1$ if $j \notin \cN_t(i)$ and $\eps_{i,j,t} = \frac{|\{k : |x_t(k) - x_t(i)| \leq 1\}|}{|\cN_t(i)|}-1$ otherwise). 

\item
Can we prove a polynomial bound for the convergence of the non-deterministic HK model in multiple dimensions? Our current proof does not extend while the approach used in \cref{sec:soc} seems sensitive to the presence of non-determinism.
\end{itemize}

\section*{Acknowledgments}
We thank Ashish Goel for very helpful discussions and Vinay Vasista for assisting with the experiments reported in \cref{sec:exp}. 

\bibliographystyle{alpha}
\bibliography{papers}

\end{document}